\newcommand{\xobs}{\ensuremath{\mathbf{x}_{\text{obs}}}}
\newcommand{\x}{\mathbf{x}}
\newcommand{\y}{\mathbf{y}}
\newcommand{\norm}[1]{\left\lVert#1\right\rVert}
\newcommand{\abs}[1]{\left|#1\right|}
\newcommand{\pr}[1]{\mathrm{p}(#1)} 
\newcommand{\gvn}{|}
\newcommand{\betavec}{\boldsymbol{\beta}}
\renewcommand{\hat}[1]{\widehat{#1}}
\newcommand{\code}[1]{\texttt{#1}}
\newcommand{\python}{\code{Python}}
\newcommand{\rlang}{\code{R}}
\newcommand{\nnkcde}{\code{NNKCDE}}
\newcommand{\rfcde}{\code{RFCDE}}
\newcommand{\frfcde}{\code{fRFCDE}}
\newcommand{\flexcode}{\code{FlexCode}}
\newcommand{\flexzboost}{\code{FlexZBoost}}
\newcommand{\deepcde}{\code{DeepCDE}}
\newcommand{\cdetools}{\code{cdetools}}
\newcommand{\pz}{photo-$z$}
\newcommand{\Pz}{Photo-$z$}
\newcommand{\ptr}{p_{{\rm spec}}}
\newcommand{\pte}{p_{{\rm photo}}}
\newcommand{\data}[1]{\textsc{#1}}
\newcommand{\lsst}{\data{LSST}}
\newcommand{\sdss}{\data{SDSS}}
\newcommand{\teddy}{\data{Teddy}}
\newcommand{\xobsph}{\ensuremath{\mathbf{x}_{\text{val}}}}
\newcommand{\yobsph}{\ensuremath{y_{\text{val}}}}
\newcommand{\Yobsph}{\ensuremath{\mathbf{y}_{\text{val}}}}
\newcommand{\zobsph}{\ensuremath{z_{\text{val}}}}
\newcommand{\etaobsph}{\ensuremath{\eta_{\text{val}}}}
\newtheorem{Lemma}{Lemma}
\def\PYGdefault@reset{\let\PYGdefault@it=\relax \let\PYGdefault@bf=\relax%
    \let\PYGdefault@ul=\relax \let\PYGdefault@tc=\relax%
    \let\PYGdefault@bc=\relax \let\PYGdefault@ff=\relax}
\def\PYGdefault@tok#1{\csname PYGdefault@tok@#1\endcsname}
\def\PYGdefault@toks#1+{\ifx\relax#1\empty\else%
    \PYGdefault@tok{#1}\expandafter\PYGdefault@toks\fi}
\def\PYGdefault@do#1{\PYGdefault@bc{\PYGdefault@tc{\PYGdefault@ul{%
    \PYGdefault@it{\PYGdefault@bf{\PYGdefault@ff{#1}}}}}}}
\def\PYGdefault#1#2{\PYGdefault@reset\PYGdefault@toks#1+\relax+\PYGdefault@do{#2}}
\def\csname PYGdefault@tok@w\endcsname{\def\PYGdefault@tc##1{\textcolor[rgb]{0.73,0.73,0.73}{##1}}}
\def\csname PYGdefault@tok@c\endcsname{\let\PYGdefault@it=\textit\def\PYGdefault@tc##1{\textcolor[rgb]{0.25,0.50,0.50}{##1}}}
\def\csname PYGdefault@tok@cp\endcsname{\def\PYGdefault@tc##1{\textcolor[rgb]{0.74,0.48,0.00}{##1}}}
\def\csname PYGdefault@tok@k\endcsname{\let\PYGdefault@bf=\textbf\def\PYGdefault@tc##1{\textcolor[rgb]{0.00,0.50,0.00}{##1}}}
\def\csname PYGdefault@tok@kp\endcsname{\def\PYGdefault@tc##1{\textcolor[rgb]{0.00,0.50,0.00}{##1}}}
\def\csname PYGdefault@tok@kt\endcsname{\def\PYGdefault@tc##1{\textcolor[rgb]{0.69,0.00,0.25}{##1}}}
\def\csname PYGdefault@tok@o\endcsname{\def\PYGdefault@tc##1{\textcolor[rgb]{0.40,0.40,0.40}{##1}}}
\def\csname PYGdefault@tok@ow\endcsname{\let\PYGdefault@bf=\textbf\def\PYGdefault@tc##1{\textcolor[rgb]{0.67,0.13,1.00}{##1}}}
\def\csname PYGdefault@tok@nb\endcsname{\def\PYGdefault@tc##1{\textcolor[rgb]{0.00,0.50,0.00}{##1}}}
\def\csname PYGdefault@tok@nf\endcsname{\def\PYGdefault@tc##1{\textcolor[rgb]{0.00,0.00,1.00}{##1}}}
\def\csname PYGdefault@tok@nc\endcsname{\let\PYGdefault@bf=\textbf\def\PYGdefault@tc##1{\textcolor[rgb]{0.00,0.00,1.00}{##1}}}
\def\csname PYGdefault@tok@nn\endcsname{\let\PYGdefault@bf=\textbf\def\PYGdefault@tc##1{\textcolor[rgb]{0.00,0.00,1.00}{##1}}}
\def\csname PYGdefault@tok@ne\endcsname{\let\PYGdefault@bf=\textbf\def\PYGdefault@tc##1{\textcolor[rgb]{0.82,0.25,0.23}{##1}}}
\def\csname PYGdefault@tok@nv\endcsname{\def\PYGdefault@tc##1{\textcolor[rgb]{0.10,0.09,0.49}{##1}}}
\def\csname PYGdefault@tok@no\endcsname{\def\PYGdefault@tc##1{\textcolor[rgb]{0.53,0.00,0.00}{##1}}}
\def\csname PYGdefault@tok@nl\endcsname{\def\PYGdefault@tc##1{\textcolor[rgb]{0.63,0.63,0.00}{##1}}}
\def\csname PYGdefault@tok@ni\endcsname{\let\PYGdefault@bf=\textbf\def\PYGdefault@tc##1{\textcolor[rgb]{0.60,0.60,0.60}{##1}}}
\def\csname PYGdefault@tok@na\endcsname{\def\PYGdefault@tc##1{\textcolor[rgb]{0.49,0.56,0.16}{##1}}}
\def\csname PYGdefault@tok@nt\endcsname{\let\PYGdefault@bf=\textbf\def\PYGdefault@tc##1{\textcolor[rgb]{0.00,0.50,0.00}{##1}}}
\def\csname PYGdefault@tok@nd\endcsname{\def\PYGdefault@tc##1{\textcolor[rgb]{0.67,0.13,1.00}{##1}}}
\def\csname PYGdefault@tok@s\endcsname{\def\PYGdefault@tc##1{\textcolor[rgb]{0.73,0.13,0.13}{##1}}}
\def\csname PYGdefault@tok@sd\endcsname{\let\PYGdefault@it=\textit\def\PYGdefault@tc##1{\textcolor[rgb]{0.73,0.13,0.13}{##1}}}
\def\csname PYGdefault@tok@si\endcsname{\let\PYGdefault@bf=\textbf\def\PYGdefault@tc##1{\textcolor[rgb]{0.73,0.40,0.53}{##1}}}
\def\csname PYGdefault@tok@se\endcsname{\let\PYGdefault@bf=\textbf\def\PYGdefault@tc##1{\textcolor[rgb]{0.73,0.40,0.13}{##1}}}
\def\csname PYGdefault@tok@sr\endcsname{\def\PYGdefault@tc##1{\textcolor[rgb]{0.73,0.40,0.53}{##1}}}
\def\csname PYGdefault@tok@ss\endcsname{\def\PYGdefault@tc##1{\textcolor[rgb]{0.10,0.09,0.49}{##1}}}
\def\csname PYGdefault@tok@sx\endcsname{\def\PYGdefault@tc##1{\textcolor[rgb]{0.00,0.50,0.00}{##1}}}
\def\csname PYGdefault@tok@m\endcsname{\def\PYGdefault@tc##1{\textcolor[rgb]{0.40,0.40,0.40}{##1}}}
\def\csname PYGdefault@tok@gh\endcsname{\let\PYGdefault@bf=\textbf\def\PYGdefault@tc##1{\textcolor[rgb]{0.00,0.00,0.50}{##1}}}
\def\csname PYGdefault@tok@gu\endcsname{\let\PYGdefault@bf=\textbf\def\PYGdefault@tc##1{\textcolor[rgb]{0.50,0.00,0.50}{##1}}}
\def\csname PYGdefault@tok@gd\endcsname{\def\PYGdefault@tc##1{\textcolor[rgb]{0.63,0.00,0.00}{##1}}}
\def\csname PYGdefault@tok@gi\endcsname{\def\PYGdefault@tc##1{\textcolor[rgb]{0.00,0.63,0.00}{##1}}}
\def\csname PYGdefault@tok@gr\endcsname{\def\PYGdefault@tc##1{\textcolor[rgb]{1.00,0.00,0.00}{##1}}}
\def\csname PYGdefault@tok@ge\endcsname{\let\PYGdefault@it=\textit}
\def\csname PYGdefault@tok@gs\endcsname{\let\PYGdefault@bf=\textbf}
\def\csname PYGdefault@tok@gp\endcsname{\let\PYGdefault@bf=\textbf\def\PYGdefault@tc##1{\textcolor[rgb]{0.00,0.00,0.50}{##1}}}
\def\csname PYGdefault@tok@go\endcsname{\def\PYGdefault@tc##1{\textcolor[rgb]{0.53,0.53,0.53}{##1}}}
\def\csname PYGdefault@tok@gt\endcsname{\def\PYGdefault@tc##1{\textcolor[rgb]{0.00,0.27,0.87}{##1}}}
\def\csname PYGdefault@tok@err\endcsname{\def\PYGdefault@bc##1{\setlength{\fboxsep}{0pt}\fcolorbox[rgb]{1.00,0.00,0.00}{1,1,1}{\strut ##1}}}
\def\csname PYGdefault@tok@kc\endcsname{\let\PYGdefault@bf=\textbf\def\PYGdefault@tc##1{\textcolor[rgb]{0.00,0.50,0.00}{##1}}}
\def\csname PYGdefault@tok@kd\endcsname{\let\PYGdefault@bf=\textbf\def\PYGdefault@tc##1{\textcolor[rgb]{0.00,0.50,0.00}{##1}}}
\def\csname PYGdefault@tok@kn\endcsname{\let\PYGdefault@bf=\textbf\def\PYGdefault@tc##1{\textcolor[rgb]{0.00,0.50,0.00}{##1}}}
\def\csname PYGdefault@tok@kr\endcsname{\let\PYGdefault@bf=\textbf\def\PYGdefault@tc##1{\textcolor[rgb]{0.00,0.50,0.00}{##1}}}
\def\csname PYGdefault@tok@bp\endcsname{\def\PYGdefault@tc##1{\textcolor[rgb]{0.00,0.50,0.00}{##1}}}
\def\csname PYGdefault@tok@fm\endcsname{\def\PYGdefault@tc##1{\textcolor[rgb]{0.00,0.00,1.00}{##1}}}
\def\csname PYGdefault@tok@vc\endcsname{\def\PYGdefault@tc##1{\textcolor[rgb]{0.10,0.09,0.49}{##1}}}
\def\csname PYGdefault@tok@vg\endcsname{\def\PYGdefault@tc##1{\textcolor[rgb]{0.10,0.09,0.49}{##1}}}
\def\csname PYGdefault@tok@vi\endcsname{\def\PYGdefault@tc##1{\textcolor[rgb]{0.10,0.09,0.49}{##1}}}
\def\csname PYGdefault@tok@vm\endcsname{\def\PYGdefault@tc##1{\textcolor[rgb]{0.10,0.09,0.49}{##1}}}
\def\csname PYGdefault@tok@sa\endcsname{\def\PYGdefault@tc##1{\textcolor[rgb]{0.73,0.13,0.13}{##1}}}
\def\csname PYGdefault@tok@sb\endcsname{\def\PYGdefault@tc##1{\textcolor[rgb]{0.73,0.13,0.13}{##1}}}
\def\csname PYGdefault@tok@sc\endcsname{\def\PYGdefault@tc##1{\textcolor[rgb]{0.73,0.13,0.13}{##1}}}
\def\csname PYGdefault@tok@dl\endcsname{\def\PYGdefault@tc##1{\textcolor[rgb]{0.73,0.13,0.13}{##1}}}
\def\csname PYGdefault@tok@s2\endcsname{\def\PYGdefault@tc##1{\textcolor[rgb]{0.73,0.13,0.13}{##1}}}
\def\csname PYGdefault@tok@sh\endcsname{\def\PYGdefault@tc##1{\textcolor[rgb]{0.73,0.13,0.13}{##1}}}
\def\csname PYGdefault@tok@s1\endcsname{\def\PYGdefault@tc##1{\textcolor[rgb]{0.73,0.13,0.13}{##1}}}
\def\csname PYGdefault@tok@mb\endcsname{\def\PYGdefault@tc##1{\textcolor[rgb]{0.40,0.40,0.40}{##1}}}
\def\csname PYGdefault@tok@mf\endcsname{\def\PYGdefault@tc##1{\textcolor[rgb]{0.40,0.40,0.40}{##1}}}
\def\csname PYGdefault@tok@mh\endcsname{\def\PYGdefault@tc##1{\textcolor[rgb]{0.40,0.40,0.40}{##1}}}
\def\csname PYGdefault@tok@mi\endcsname{\def\PYGdefault@tc##1{\textcolor[rgb]{0.40,0.40,0.40}{##1}}}
\def\csname PYGdefault@tok@il\endcsname{\def\PYGdefault@tc##1{\textcolor[rgb]{0.40,0.40,0.40}{##1}}}
\def\csname PYGdefault@tok@mo\endcsname{\def\PYGdefault@tc##1{\textcolor[rgb]{0.40,0.40,0.40}{##1}}}
\def\csname PYGdefault@tok@ch\endcsname{\let\PYGdefault@it=\textit\def\PYGdefault@tc##1{\textcolor[rgb]{0.25,0.50,0.50}{##1}}}
\def\csname PYGdefault@tok@cm\endcsname{\let\PYGdefault@it=\textit\def\PYGdefault@tc##1{\textcolor[rgb]{0.25,0.50,0.50}{##1}}}
\def\csname PYGdefault@tok@cpf\endcsname{\let\PYGdefault@it=\textit\def\PYGdefault@tc##1{\textcolor[rgb]{0.25,0.50,0.50}{##1}}}
\def\csname PYGdefault@tok@c1\endcsname{\let\PYGdefault@it=\textit\def\PYGdefault@tc##1{\textcolor[rgb]{0.25,0.50,0.50}{##1}}}
\def\csname PYGdefault@tok@cs\endcsname{\let\PYGdefault@it=\textit\def\PYGdefault@tc##1{\textcolor[rgb]{0.25,0.50,0.50}{##1}}}
\def\PYG@reset{\let\PYG@it=\relax \let\PYG@bf=\relax%
    \let\PYG@ul=\relax \let\PYG@tc=\relax%
    \let\PYG@bc=\relax \let\PYG@ff=\relax}
\def\PYG@tok#1{\csname PYG@tok@#1\endcsname}
\def\PYG@toks#1+{\ifx\relax#1\empty\else%
    \PYG@tok{#1}\expandafter\PYG@toks\fi}
\def\PYG@do#1{\PYG@bc{\PYG@tc{\PYG@ul{%
    \PYG@it{\PYG@bf{\PYG@ff{#1}}}}}}}
\def\PYG#1#2{\PYG@reset\PYG@toks#1+\relax+\PYG@do{#2}}
\def\csname PYG@tok@w\endcsname{\def\PYG@tc##1{\textcolor[rgb]{0.73,0.73,0.73}{##1}}}
\def\csname PYG@tok@c\endcsname{\let\PYG@it=\textit\def\PYG@tc##1{\textcolor[rgb]{0.25,0.50,0.50}{##1}}}
\def\csname PYG@tok@cp\endcsname{\def\PYG@tc##1{\textcolor[rgb]{0.74,0.48,0.00}{##1}}}
\def\csname PYG@tok@k\endcsname{\let\PYG@bf=\textbf\def\PYG@tc##1{\textcolor[rgb]{0.00,0.50,0.00}{##1}}}
\def\csname PYG@tok@kp\endcsname{\def\PYG@tc##1{\textcolor[rgb]{0.00,0.50,0.00}{##1}}}
\def\csname PYG@tok@kt\endcsname{\def\PYG@tc##1{\textcolor[rgb]{0.69,0.00,0.25}{##1}}}
\def\csname PYG@tok@o\endcsname{\def\PYG@tc##1{\textcolor[rgb]{0.40,0.40,0.40}{##1}}}
\def\csname PYG@tok@ow\endcsname{\let\PYG@bf=\textbf\def\PYG@tc##1{\textcolor[rgb]{0.67,0.13,1.00}{##1}}}
\def\csname PYG@tok@nb\endcsname{\def\PYG@tc##1{\textcolor[rgb]{0.00,0.50,0.00}{##1}}}
\def\csname PYG@tok@nf\endcsname{\def\PYG@tc##1{\textcolor[rgb]{0.00,0.00,1.00}{##1}}}
\def\csname PYG@tok@nc\endcsname{\let\PYG@bf=\textbf\def\PYG@tc##1{\textcolor[rgb]{0.00,0.00,1.00}{##1}}}
\def\csname PYG@tok@nn\endcsname{\let\PYG@bf=\textbf\def\PYG@tc##1{\textcolor[rgb]{0.00,0.00,1.00}{##1}}}
\def\csname PYG@tok@ne\endcsname{\let\PYG@bf=\textbf\def\PYG@tc##1{\textcolor[rgb]{0.82,0.25,0.23}{##1}}}
\def\csname PYG@tok@nv\endcsname{\def\PYG@tc##1{\textcolor[rgb]{0.10,0.09,0.49}{##1}}}
\def\csname PYG@tok@no\endcsname{\def\PYG@tc##1{\textcolor[rgb]{0.53,0.00,0.00}{##1}}}
\def\csname PYG@tok@nl\endcsname{\def\PYG@tc##1{\textcolor[rgb]{0.63,0.63,0.00}{##1}}}
\def\csname PYG@tok@ni\endcsname{\let\PYG@bf=\textbf\def\PYG@tc##1{\textcolor[rgb]{0.60,0.60,0.60}{##1}}}
\def\csname PYG@tok@na\endcsname{\def\PYG@tc##1{\textcolor[rgb]{0.49,0.56,0.16}{##1}}}
\def\csname PYG@tok@nt\endcsname{\let\PYG@bf=\textbf\def\PYG@tc##1{\textcolor[rgb]{0.00,0.50,0.00}{##1}}}
\def\csname PYG@tok@nd\endcsname{\def\PYG@tc##1{\textcolor[rgb]{0.67,0.13,1.00}{##1}}}
\def\csname PYG@tok@s\endcsname{\def\PYG@tc##1{\textcolor[rgb]{0.73,0.13,0.13}{##1}}}
\def\csname PYG@tok@sd\endcsname{\let\PYG@it=\textit\def\PYG@tc##1{\textcolor[rgb]{0.73,0.13,0.13}{##1}}}
\def\csname PYG@tok@si\endcsname{\let\PYG@bf=\textbf\def\PYG@tc##1{\textcolor[rgb]{0.73,0.40,0.53}{##1}}}
\def\csname PYG@tok@se\endcsname{\let\PYG@bf=\textbf\def\PYG@tc##1{\textcolor[rgb]{0.73,0.40,0.13}{##1}}}
\def\csname PYG@tok@sr\endcsname{\def\PYG@tc##1{\textcolor[rgb]{0.73,0.40,0.53}{##1}}}
\def\csname PYG@tok@ss\endcsname{\def\PYG@tc##1{\textcolor[rgb]{0.10,0.09,0.49}{##1}}}
\def\csname PYG@tok@sx\endcsname{\def\PYG@tc##1{\textcolor[rgb]{0.00,0.50,0.00}{##1}}}
\def\csname PYG@tok@m\endcsname{\def\PYG@tc##1{\textcolor[rgb]{0.40,0.40,0.40}{##1}}}
\def\csname PYG@tok@gh\endcsname{\let\PYG@bf=\textbf\def\PYG@tc##1{\textcolor[rgb]{0.00,0.00,0.50}{##1}}}
\def\csname PYG@tok@gu\endcsname{\let\PYG@bf=\textbf\def\PYG@tc##1{\textcolor[rgb]{0.50,0.00,0.50}{##1}}}
\def\csname PYG@tok@gd\endcsname{\def\PYG@tc##1{\textcolor[rgb]{0.63,0.00,0.00}{##1}}}
\def\csname PYG@tok@gi\endcsname{\def\PYG@tc##1{\textcolor[rgb]{0.00,0.63,0.00}{##1}}}
\def\csname PYG@tok@gr\endcsname{\def\PYG@tc##1{\textcolor[rgb]{1.00,0.00,0.00}{##1}}}
\def\csname PYG@tok@ge\endcsname{\let\PYG@it=\textit}
\def\csname PYG@tok@gs\endcsname{\let\PYG@bf=\textbf}
\def\csname PYG@tok@gp\endcsname{\let\PYG@bf=\textbf\def\PYG@tc##1{\textcolor[rgb]{0.00,0.00,0.50}{##1}}}
\def\csname PYG@tok@go\endcsname{\def\PYG@tc##1{\textcolor[rgb]{0.53,0.53,0.53}{##1}}}
\def\csname PYG@tok@gt\endcsname{\def\PYG@tc##1{\textcolor[rgb]{0.00,0.27,0.87}{##1}}}
\def\csname PYG@tok@err\endcsname{\def\PYG@bc##1{\setlength{\fboxsep}{0pt}\fcolorbox[rgb]{1.00,0.00,0.00}{1,1,1}{\strut ##1}}}
\def\csname PYG@tok@kc\endcsname{\let\PYG@bf=\textbf\def\PYG@tc##1{\textcolor[rgb]{0.00,0.50,0.00}{##1}}}
\def\csname PYG@tok@kd\endcsname{\let\PYG@bf=\textbf\def\PYG@tc##1{\textcolor[rgb]{0.00,0.50,0.00}{##1}}}
\def\csname PYG@tok@kn\endcsname{\let\PYG@bf=\textbf\def\PYG@tc##1{\textcolor[rgb]{0.00,0.50,0.00}{##1}}}
\def\csname PYG@tok@kr\endcsname{\let\PYG@bf=\textbf\def\PYG@tc##1{\textcolor[rgb]{0.00,0.50,0.00}{##1}}}
\def\csname PYG@tok@bp\endcsname{\def\PYG@tc##1{\textcolor[rgb]{0.00,0.50,0.00}{##1}}}
\def\csname PYG@tok@fm\endcsname{\def\PYG@tc##1{\textcolor[rgb]{0.00,0.00,1.00}{##1}}}
\def\csname PYG@tok@vc\endcsname{\def\PYG@tc##1{\textcolor[rgb]{0.10,0.09,0.49}{##1}}}
\def\csname PYG@tok@vg\endcsname{\def\PYG@tc##1{\textcolor[rgb]{0.10,0.09,0.49}{##1}}}
\def\csname PYG@tok@vi\endcsname{\def\PYG@tc##1{\textcolor[rgb]{0.10,0.09,0.49}{##1}}}
\def\csname PYG@tok@vm\endcsname{\def\PYG@tc##1{\textcolor[rgb]{0.10,0.09,0.49}{##1}}}
\def\csname PYG@tok@sa\endcsname{\def\PYG@tc##1{\textcolor[rgb]{0.73,0.13,0.13}{##1}}}
\def\csname PYG@tok@sb\endcsname{\def\PYG@tc##1{\textcolor[rgb]{0.73,0.13,0.13}{##1}}}
\def\csname PYG@tok@sc\endcsname{\def\PYG@tc##1{\textcolor[rgb]{0.73,0.13,0.13}{##1}}}
\def\csname PYG@tok@dl\endcsname{\def\PYG@tc##1{\textcolor[rgb]{0.73,0.13,0.13}{##1}}}
\def\csname PYG@tok@s2\endcsname{\def\PYG@tc##1{\textcolor[rgb]{0.73,0.13,0.13}{##1}}}
\def\csname PYG@tok@sh\endcsname{\def\PYG@tc##1{\textcolor[rgb]{0.73,0.13,0.13}{##1}}}
\def\csname PYG@tok@s1\endcsname{\def\PYG@tc##1{\textcolor[rgb]{0.73,0.13,0.13}{##1}}}
\def\csname PYG@tok@mb\endcsname{\def\PYG@tc##1{\textcolor[rgb]{0.40,0.40,0.40}{##1}}}
\def\csname PYG@tok@mf\endcsname{\def\PYG@tc##1{\textcolor[rgb]{0.40,0.40,0.40}{##1}}}
\def\csname PYG@tok@mh\endcsname{\def\PYG@tc##1{\textcolor[rgb]{0.40,0.40,0.40}{##1}}}
\def\csname PYG@tok@mi\endcsname{\def\PYG@tc##1{\textcolor[rgb]{0.40,0.40,0.40}{##1}}}
\def\csname PYG@tok@il\endcsname{\def\PYG@tc##1{\textcolor[rgb]{0.40,0.40,0.40}{##1}}}
\def\csname PYG@tok@mo\endcsname{\def\PYG@tc##1{\textcolor[rgb]{0.40,0.40,0.40}{##1}}}
\def\csname PYG@tok@ch\endcsname{\let\PYG@it=\textit\def\PYG@tc##1{\textcolor[rgb]{0.25,0.50,0.50}{##1}}}
\def\csname PYG@tok@cm\endcsname{\let\PYG@it=\textit\def\PYG@tc##1{\textcolor[rgb]{0.25,0.50,0.50}{##1}}}
\def\csname PYG@tok@cpf\endcsname{\let\PYG@it=\textit\def\PYG@tc##1{\textcolor[rgb]{0.25,0.50,0.50}{##1}}}
\def\csname PYG@tok@c1\endcsname{\let\PYG@it=\textit\def\PYG@tc##1{\textcolor[rgb]{0.25,0.50,0.50}{##1}}}
\def\csname PYG@tok@cs\endcsname{\let\PYG@it=\textit\def\PYG@tc##1{\textcolor[rgb]{0.25,0.50,0.50}{##1}}}
\begin{document}

\shorttitle{Tools for Conditional Density Estimation}
\title{Conditional Density Estimation Tools in \texttt{Python} and \texttt{R}\\
with Applications to Photometric Redshifts and Likelihood-Free Cosmological Inference}

\shortauthors{Dalmasso et al.}

\author{Niccol\`o Dalmasso}
\affiliation{Department of Statistics \& Data Science, Carnegie Mellon University, USA}

\correspondingauthor{Niccol\`o Dalmasso, \texttt{ndalmass@stat.cmu.edu}}

\author{Taylor Pospisil}
\affiliation{Google LLC, Mountain View, USA} 

\author{Ann B. Lee}
\affiliation{Department of Statistics \& Data Science, Carnegie Mellon University, USA}

\author{Rafael Izbicki}
\affiliation{Department of Statistics, Federal University of S\~ao Carlos, Brazil}

\author{Peter E. Freeman}
\affiliation{Department of Statistics \& Data Science, Carnegie Mellon University, USA}

\author{Alex I. Malz}
\affiliation{German Centre of Cosmological Lensing, Ruhr-Universit\"{a}t Bochum, Germany}
\affiliation{Center for Cosmology and Particle Physics, New York University, USA}

\begin{abstract}
It is well known in astronomy that propagating non-Gaussian prediction uncertainty in photometric redshift estimates is key to reducing bias in downstream cosmological analyses.
Similarly, likelihood-free inference approaches, which are beginning to emerge as a tool for cosmological analysis, require a characterization of the full uncertainty landscape of the parameters of interest given observed data.
However, most machine learning (ML) or training-based methods with open-source software target point prediction or classification, and hence fall short in quantifying uncertainty in complex regression and parameter inference settings such as the applications mentioned above.  
As an alternative to methods that focus on predicting the response (or parameters) $\y$ from features $\x$, we provide nonparametric conditional density estimation (CDE) tools for approximating and validating the entire probability density function (PDF)  $\mathrm{p}(\y | \x)$ of $\y$ given (i.e., conditional on) $\x$.
This density approach offers a more nuanced accounting of uncertainty in situations with, e.g., nonstandard error distributions and multimodal or heteroskedastic response variables that are often present in astronomical data sets. 
As there is no one-size-fits-all CDE method, and the ultimate choice of model depends on the application and the training sample size, the goal of this work is to provide a comprehensive range of statistical tools and open-source software for nonparametric CDE and method assessment which can accommodate different types of settings --- involving, e.g., mixed-type input from multiple sources, functional data, and images --- and which in addition can easily be fit to the problem at hand. 
Specifically, we introduce four CDE software packages in \texttt{Python} and \texttt{R} based on ML prediction methods  \emph{adapted and optimized for CDE}:
\texttt{NNKCDE}, \texttt{RFCDE}, \texttt{FlexCode}, and \texttt{DeepCDE}.
Furthermore, we present the \texttt{cdetools} package with evaluation metrics. 
This package includes functions for computing a CDE loss function for tuning and assessing the quality of individual PDFs, together with diagnostic functions that probe the population-level performance of the PDFs.  
We provide sample code in \texttt{Python} and \texttt{R} as well as examples of applications to photometric redshift estimation and likelihood-free cosmological inference via CDE.
\end{abstract}

\keywords{Nonparametric statistics, Statistical software, Statistical computing, methods: data analysis, galaxies: distances and redshifts, cosmology: cosmological parameters}

\section{Introduction}

Machine learning (ML) has seen a surge in popularity in almost all fields of astronomy that involve massive amounts of complex data \citep{ball2010data, way2012advances, ivezic2014statistics, ntampaka2019role}.
Most ML methods target regression and classification, whose primary goal is to return a point estimate of an unknown response variable $\y$ given observed features $\x$, often falling short in quantifying nontrivial uncertainty in $\y$.
For instance, returning a point estimate for a supernova's type $\y$ given a supernova's light curve $\x$, or for a galaxy mass $\y$ given its light spectrum $\x$, fails to capture degeneracies in the mapping from $\x$ to $\y$.
Neglecting to propagate these uncertainties through down-stream  analyses may lead to imprecise or even inaccurate inferences of physical parameters. 

Consider the following two examples of problems where uncertainty quantification can be impactful:
\begin{itemize}
    \item \textbf{Photometric redshift estimation.}
    In photometric redshift (\pz) estimation, one attempts to constrain
    the cosmological {\em redshift} ($z$) of a galaxy  after observing the shifted spectrum using a handful of broadband filters, and sometimes additional variables such as morphology and environment.
    In a prediction setting, the response $\y$ could be the galaxy's true (i.e. spectroscopically observed) redshift but could also include other galaxy properties ($\alpha$) such as the galaxy's mass or age; 
    the features $\x$ would represent the collection of directly observable inputs such as photometric magnitudes and colors used to predict $\y = z$ or more generally
    a multivariate response $\y = (z, \alpha)$. 
    The use of \pz\ posterior estimates --- that is, estimates of the probability density functions (PDFs) for individual galaxies --- is crucial for cosmological inference from photometric galaxy surveys, as $\gtrsim 99\%$ of currently cataloged galaxies are observed solely via photometry, a percentage that will only grow in the coming decade as the Large Synoptic Survey Telescope (\lsst) begins gathering data \citep{Ivezic19}. 
    In addition, widely different redshifts can be consistent with the observed colors of a galaxy; 
    \pz\ posterior estimates can capture such degeneracy or multimodality in the distribution whereas point estimates cannot.
    Though the benefits of using posteriors over $\alpha$ have yet to be fully exploited \citep{2018A&A...614A.129V}, it is thoroughly established that one can improve down-stream cosmological analysis by properly propagating \pz\ estimate uncertainties via probability density functions (PDFs) rather than just using simple point predictions of $\y$ \citep{mandelbaum08precisionphotoz, wittman2009lies, sheldon2012photometric, carrasco2013tpz, graff2014skynet, desc_photoz}. 
    \item \textbf{Forward-modeled observables in cosmology.} 
    Some cosmological probes, such as the type Ia supernova (SN Ia) distance-redshift relationship, have observables that are straightforward to simulate in spite of an intractable likelihood.
    Likelihood-Free Inference (LFI) methods allow for parameter inference in settings where the likelihood function, which relates the observable  data $\mathbf{x}_{\rm obs}$ to the parameters of interest $\boldsymbol{\theta}$, is too complex to work with directly, but one is able to {\em simulate} $\mathbf{x}$ from a stochastic forward model at fixed parameter settings $\boldsymbol{\theta}$. 
    The most common approach to LFI or simulation-based parameter inference is Approximate Bayesian Computation (ABC), whose many variations (see \citealt{beaumont2002approximate} and \citealt{sisson2018handbook} for a review) repeatedly draw from the simulator and compare the output with observed data $\xobs$ in real time to arrive at a set of plausible parameter values consistent with $\xobs$.
    With computationally intensive simulations, however, a classical ABC approach may not be practical. 
    An alternative approach to ABC rejection sampling is to use faster training-based methods to assess the uncertainty about $\boldsymbol{\theta}$ for any $\x$ first, and then consider the specific case $\x = \xobs$.
\end{itemize}

From a statistical perspective, the right tool for quantifying the uncertainty about $\y$ once $\x$ is observed is the {\em conditional density} $\pr{\mathbf{y} \gvn \mathbf{x}}$.
In a prediction context such as for \pz\ problems, where heteroskedastic errors or multimodal responses may occur, conditional density estimation (CDE) of the density $\pr{z \gvn \mathbf{x}}$ for the redshift $z$ of individual galaxies given photometric data $\mathbf{x}$ provides a more nuanced accounting of uncertainty than a point estimate or prediction interval alone. 
CDE can also be used in LFI where, in our notation, the parameters of interest $\boldsymbol{\theta}$ take the role of the ``response'' $y$.
In such settings, one can apply training-based approaches to forward-simulated data to estimate the posterior probability distribution $\pr{\boldsymbol{\theta} \gvn \xobs}$ of cosmological parameters $\boldsymbol{\theta}$ given observed data $\xobs$, and from these posteriors then derive, e.g. posterior credible intervals of $\theta$.
Section \ref{sec:mult_response_mult_data} shows an example of cosmological parameter inference using mock weak lensing data.
Here we follow \citet{izbicki2014high, izbicki2019abc} to combine ABC and CDE by directly applying CDE techniques (Section \ref{sec:cde_methods}) and loss functions (Section \ref{sec:cde_loss}) to simulated data $\{(\boldsymbol{\theta}_i, \mathbf{x}_i)\}_{i=1}^{n}$.
Similar works include performing LFI via CDE using Gaussian copulas \citep{li2017gaussiancopula, chen2019gaussiancopula} and random forests \citep{marin2016abcrf}. 
Other examples include neural density estimation in LFI via mixture density networks and masked autoregressive flows \citep{papamakarios2016fast, Lueckmann2017posterior, lueckmann2019likelihood, papamakarios2019sequential, greenberg2019automatic, alsing2018massive, alsing2019fast}.

Data in astronomy present a challenge to estimating conditional densities, due to both the complexity of the underlying physical processes and the complicated observational noise. 
Precision cosmology, for example, requires combining data from different scientific probes, {each affected by unique sources of systematic uncertainty, to produce samples from} complicated joint likelihood functions {with nontrivial covariances} in a high-dimensional parameter space \citep{krause2017dark, Joudaki17, aghanim2018planck, vanUitert18, Abbott18}. 
In such situations, CDE methods that target a variety of settings and non-standard data (images, correlation functions, mixed data types) become especially valuable. 
However, for any given data type, there is no one-size-fits-all CDE method.
For example, deep neural networks often perform well in settings with large amounts of representative training data but in applications with smaller training samples one may need a different tool. 
There is also additional value in models that are interpretable and easy to fit to the data at hand.

{\bf The goal of this paper is to provide statistical tools and open-source software for nonparametric CDE and method assessment appropriate for challenging data in a variety of inference scenarios.} 

To our knowledge, existing CDE software either targets discrete $\y$ (e.g., probabilistic classifiers or ordinal classification \citep{eibe2001OrdinalClass}) or uses kernel density estimation (KDE) 
across all data points to provide an estimate for a continuous $\y$ \citep{hdrcde}.
What distinguishes our methodology work from others is that we are able to adapt virtually any training-based prediction method to the problem of estimating full probability distributions. 
By leveraging existing ML methods, we are hence able to provide uncertainty prediction methods and software for more general and complex data settings than `the computational tools currently available 
in the literature. 
In this paper, we showcase and provide \python\ and \rlang\ code for four flexible CDE methods: 
\nnkcde, \rfcde/\frfcde, \flexcode\ and \deepcde.\footnote{All of these methods, except for \deepcde, have occurred in previously published or archived papers. 
Hence, we only briefly review the methods in this paper and instead focus on software usage, algorithmic aspects, and the settings under which each method can be applied.} 
Each CDE approach has particular usage for different settings of response dimensionality, feature types, and computational requirements. 
Table \ref{tab:method-strenghts} provides a high-level overview in the top panel, and lists some properties of each method in the bottom panel.

A highlight of our software is that it makes uncertainty quantification straightforward for users of standard open-source machine learning \python\ packages.
As \nnkcde, \rfcde/\frfcde, \flexcode\ share the \code{sklearn} \citep{Pedregosa2011sklearn} API (with \texttt{fit} and \texttt{predict} methods), our methods are usable within the \code{sklearn} ecosystem for, e.g., cross-validation and model selection. 
In addition, \flexcode\ is essentially a plug-in method where the user can utilize
any \code{sklearn}-compatible regression function.
\deepcde\ has implementations for both \code{Tensorflow} \citep{tensorflow2015-whitepaper} and \code{Pytorch} \citep{paszke2019pytorch}, two of the most widely used deep learning frameworks.

\begin{table}
\caption{\textit{Top:} Naming convention, high-level summary and hyper-parameters of CDE methods, along with references for further details and code examples.
\textit{Bottom:} Comparison of CDE methods in terms of training capacity and compatibility with multivariate response and different types of features, with capacities estimated based on input with around 100 features and a standard i5/i7/quad-core processor with $16$GB of RAM.
Note that less complex methods (such as \nnkcde) tend to be easier to use, easier to interpret, and often perform better in settings with smaller training sets, whereas more complex methods (such as \deepcde) perform better in settings with larger (representative) training sets.}

\begin{tabular}{|c|c|l|l|c|}
\multicolumn{1}{c}{\textbf{Method}} & \multicolumn{1}{c}{\textbf{Name}}  & \multicolumn{1}{c}{\textbf{Summary}} & \multicolumn{1}{c}{\textbf{Hyper-parameters}} & \multicolumn{1}{c}{\textbf{Details}} \\ \hline
\multirow{4}{*}{\nnkcde} & & Computes a KDE estimate of & & \\ 
& \code{N}earest \code{N}eighbor & $\,$ multivariate $\y$ using the 
nearest &  $\bullet$ Number of neighbors $k$ & Section \ref{sec:NNKCDE} \\ 
& $\,$ \code{K}ernel \code{CDE} & $\,$ neighbors of the evaluation point  $\x$ & $\bullet$ Kernel bandwidth $h$ & (Code: \ref{app:nnkcde})  \\ 
& & $\,$ in feature space. &  & \\ \hline
\multirow{5}{*}{\rfcde} &  & Random forest  that partitions & &  \\ 
&  &  $\,$ the feature space using a CDE loss. 
 & $\bullet$ Random forest hyperparams. &  Section \ref{sec:RFCDE}  \\ 
& \code{R}andom \code{F}orest \code{CDE} & $\,$ 
Constructs a weighted KDE estimate & $\bullet$ Kernel bandwidth $h$ & (Code: \ref{app:rfcde}) \\ 
&  & $\,$ of multivariate $\y$ with weights & & \\
&  &  $\,$ defined by leaves in the forest. & & \\
\hline
\multirow{4}{*}{\frfcde} &  & \rfcde\ version suitable for functional
 & $\bullet$ Random forest hyperparams. &  \\ 
& \code{f}unctional \code{R}andom &  $\,$ features $\x$. Partitions the feature & $\bullet$ Kernel bandwidth $h$ & Section \ref{sec:frfcde}\\ 
& $\,$ \code{F}orest \code{CDE} & $\,$ space directly rather than & $\bullet$ Partition parameter $\lambda$ & (Code: \ref{app:frfcde}) \\ 
& & $\,$ representing $\x$ as a vector. & & \\ \hline
\multirow{3}{*}{\flexcode} & \code{Flex}ible \code{Co}nditional & Uses basis expansion of univariate $y$ & $\bullet$ Number of expansion coeffs. & Section \ref{sec:FlexCode}  \\ 
& \code{D}ensity \code{E}stimation &  $\,$ to turn CDE into a series of &  $\bullet$ Selected regression method & (Code: \ref{app:flexcode})\\ 
&  & $\,$ univariate regression problems. & $\quad$ hyperparams. & \\  \hline
\multirow{5}{*}{\deepcde} & & Uses basis expansion of univariate
 & & \\ 
 & \code{Deep} Neural & $\,$ $y$ similar to \flexcode, but learns
 & $\bullet$ Number of expansion coeffs. & Section \ref{sec:DeepCDE} \\ 
& Networks \code{CDE} &  $\,$ the expansion coefficients & $\bullet$ Selected deep neural network & (Code: \ref{sec:code_examples}) \\ 
& & $\,$ simultaneously using a deep &  $\quad$ architecture hyperparams. & \\
& & $\,$ neural network. & & \\ \hline
\end{tabular}

\vspace{0.25 cm}

\begin{minipage}[b]{0.12\textwidth}
\centering
    \begin{tikzpicture}
\coordinate [label=., align=right]  (A) at (0,0)  ;
\coordinate [label=., align=right]  (B) at (0,-2)  ;
\draw [->,very thick, 
]($ (A)!1.5mm!(0,0) $) -- node [left, pos=0.5, align=center] {\textit{Method} \\ \textit{Complexity}}(B) ;
\end{tikzpicture}
\end{minipage}
\begin{minipage}[b]{0.6\textwidth}
    \begin{tabular}{ccccc}
Method   &  Capacity (\# Training Pts) & Multivariate Response & Functional Features & Image Features \\ \hline
\nnkcde   & Up to $\sim 10^5$ & \checkmark             &                       &                   \\
(f)\rfcde & Up to $\sim 10^6$& \checkmark      & \checkmark            &                  \\
\flexcode & Up to $\sim 10^6$&    & \checkmark            &                  \\
\deepcde  & Up to $\sim 10^8$ &    & \checkmark                      & \checkmark       
\end{tabular}
\end{minipage}
\label{tab:method-strenghts}
\end{table}

In addition to the CDE methods above, we provide the package \cdetools, which can be used for tuning and assessing the performance of CDE models on held-out validation data. 
CDE method assessment is challenging per se because we never observe the true conditional probability density, merely samples (observations) from  it. 
Furthermore, whereas loss functions such as the root-mean-squared error (RMSE) are typically used in regression problems, they are not appropriate for the task of uncertainty quantification of estimated probability densities.
The \cdetools\ package provides two types of functions for method assessment. 
First, it provides functions for computing a so-called CDE loss function (defined by Equation~\ref{eq:cde-loss} in Section \ref{sec:cde_loss}) for tuning and assessing the quality of individual PDFs. 
Second, it provides diagnostic functions that probe the population-level performance of the PDFs. 
More specifically, we have included functions for computing the Probability Integral Transform (PIT) and Highest Posterior Density (HPD); 
these metrics check how well the final density estimates on average fit the data in  the tail and highest-density regions, respectively (see Section \ref{sec:hpd}, and Figure 2 for a visual sketch).

\textbf{Organization of the paper.}
In Section \ref{sec:cde_methods}, we introduce tools for conditional density estimation (\nnkcde, \rfcde/\frfcde, \flexcode, \deepcde). 
In Section \ref{sec:assessment}, we describe tools for model selection and diagnostics.
Then, in  Section \ref{sec:examples}, we illustrate our CDE and method assessment tools for three different applications: 
photo-$z$ estimation, likelihood-free cosmological inference and spec-$z$ estimation. 
\python\ and \rlang\ code usage examples can be found in Appendix \ref{sec:code_examples}.

\textbf{Notation.} 
We denote the true (unknown) conditional density by $p(\y|\x)$, whereas an estimate of the density is
denoted by $\hat p(\y|\x)$.
We represent the CDE loss function that measures the discrepancy between the conditional density $p$ and its estimate $\hat p$ by $L(p,\hat p)$. 
Typically this loss cannot be computed directly because it depends on unknown quantities; 
an estimate of the  loss is denoted by $\hat L(p,\hat p)$.
As before, we continue to use bold-faced letters to denote vectors.

\section{Overview of Conditional Density Estimation Tools}
\label{sec:cde_methods}

We start by briefly describing the conditional density estimators in Table \ref{tab:method-strenghts}. 
Unless otherwise stated, we choose the tuning or hyper-parameters by minimizing the CDE empirical loss in Equation~\ref{eq:estimated-cde-loss} using cross-validation.

\subsection{\nnkcde}
\label{sec:NNKCDE}

Nearest-Neighbors Kernel CDE (NNKCDE; \citealt{izbicki2017photo}, \citealt{freeman2017unified}) is a simple  and easily interpretable CDE method. 
It computes a kernel density estimate of $\y$ using the $k$ nearest neighbors of the evaluation point $\mathbf{x}$.
The model has only two tuning parameters:
the number of nearest neighbors $k$ and the  bandwidth $h$ of the smoothing kernel in $\y$-space.
Both tuning parameters are chosen in a principled way by minimizing the CDE loss on validation data.
  
More specifically, the kernel density estimate of $\y$ given $\x$ is defined as
\begin{equation}
\widehat{p}(\mathbf{y} \gvn \mathbf{x}) = \frac{1}{k} \sum_{i=1}^k K_h\left[\rho(\mathbf{y},\mathbf{y}_{s_i(\mathbf{x})})\right],
\end{equation}
where $K_h$ is a normalized kernel (e.g., a Gaussian function) with bandwidth $h$, $\rho$ is a distance metric, and $s_i(\mathbf{x})$ is the index of the $i^{\rm th}$ nearest neighbor of $\mathbf{x}$. 
It is essentially a smoother version of the histogram estimator proposed by \citet{cunha2009estimating} in that it approximates the density with a smooth continuous function rather than by binning.

We provide the \nnkcde\footnote{\url{https://github.com/tpospisi/nnkcde}} software in both \python\ and \rlang\ \citep{NNKCDE_code}, with examples in Section~\ref{app:nnkcde}.
\nnkcde\ can accommodate any number of training examples. 
However, selecting tuning parameters scales quadratically in $k$, the number of nearest neighbors, which prohibits using $k \gtrsim 10^{3}$.
Our implementation (\citealt[Appendix D]{izbicki2019abc}) is computationally more efficient than standard nearest-neighbor kernel smoothers. 
For instance, we are able to efficiently evaluate the loss function in Equation~\ref{eq:estimated-cde-loss} on large validation samples by expressing the first integral in terms of convolutions of the kernel function; 
these have a fast-to-compute closed solution for a Gaussian kernel.

\subsection{\rfcde}
\label{sec:RFCDE}

Random forests (RFs, \citealt{breiman2001random}) is one of the best off-the-shelf solutions for regression and classification problems. 
It builds a large collection of decorrelated trees, where each tree is a data-based partition of the feature space.
The trees are then averaged to yield a prediction.
RFCDE, introduced by \cite{pospisil2018rfcde}, is an extension of random forests to conditional density estimation and multivariate responses. 
Like \nnkcde, it computes a kernel density estimate of $\y$ but with nearest neighbor weightings defined by the location of the evaluation point $\x$ relative to the leaves in the random forest.
RFCDE inherits the advantages of random forests in that it can handle mixed-typed data. 
It also does not require the user to specify distances or similarities between data points, and it has good performance while remaining relatively interpretable.

The main departure from other random forest algorithms is our criterion for feature space partitioning decisions. 
In regression contexts, the splitting variable and split point are typically chosen so as to minimize the mean-squared-error loss. 
In classification contexts, the splits are typically chosen so as to minimize a classification error. 
Existing random forest density estimation methods such as quantile regression forests by \cite{meinshausen2006quantile} and the TPZ algorithm by \cite{carrasco2013tpz} use the same tree structure as regression and classification random forests, respectively. 
\rfcde, however, builds trees that minimize the CDE loss (see Equation~\ref{eq:estimated-cde-loss}), allowing the forest to adapt to structures in the conditional density; hence overcoming some of the limitations of the usual regression approach for data with heteroskedasticity and multimodality. 
In addition, \rfcde\ does not require discretizing the response as in TPZ, thereby providing more accurate results at a lower cost for continuous responses, especially in the case of multivariate continuous responses where binning is problematic. 
See \cite{pospisil2018rfcde} for further examples and comparisons.

Another unique feature of \rfcde\ is that it can handle multivariate responses with joint densities by applying a weighted kernel smoother to $\y$. 
This added feature enables analysis of complex conditional distributions that describe relationships between multiple responses and features, or equivalently between multiple parameters and observables in an LFI setting. 
Like quantile regression forests, the \rfcde\ algorithm takes advantage of the fact that random forests can be viewed as a form of adaptive nearest-neighbor method with the aggregated tree structures determining a weighting scheme. 
This weighting scheme can then be used to estimate the conditional density $\pr{\mathbf{y} \gvn \mathbf{x}}$, as well as the conditional mean and quantiles, as in quantile regression forests (but for CDE-optimized trees). 
As mentioned above, \rfcde\ computes the latter density by a weighted kernel density estimate (KDE) in $\y$ using training points near the evaluation point $\mathbf{x}$.  
These distances are effectively defined by how often a training point $\mathbf{x}_i$ belongs to the same leaf node as $\mathbf{x}$ in the forest (see  Equation 1 in \citealt{pospisil2018rfcde} for details).

Despite the increased complexity of our CDE trees, \rfcde\ still scales to large data sets because of an efficient computation of splits via orthogonal series.
Moreover, \rfcde\ extends the density estimates on new $\mathbf{x}$ to the multivariate case through the use of multivariate kernel density estimators \citep{epanechnikov1969non}. 
In both the univariate and multivariate cases, bandwidth selection can be handled by either plug-in estimators or by tuning using a density estimation loss.   

For ease of use in the statistics and astronomy communities, we provide \rfcde\footnote{\url{https://github.com/tpospisi/RFCDE}} in both \python\ and \rlang, which call a common \code{C++} implementation of the core training functions that can easily be wrapped in other languages \citep{RFCDE_code}.

\textbf{Remark:} 
Estimating a CDE loss is an inherently harder task than calculating the mean squared error (MSE) in regression. 
As a consequence, \rfcde\ might not provide meaningful tree splits in settings with a large number of noisy features. 
In such settings, one may benefit from combining a regular random forest tree structure (optimized for regression) with a weighted kernel density estimator (for the density calculation).  
See Section \ref{sec:functional} for an application to functional data along with software implementation.\footnote{Available at \url{https://github.com/Mr8ND/cdetools_applications/spec_z/}} 

\newpage

\subsubsection{\frfcde}
\label{sec:frfcde}

In addition the \rfcde\ package includes \frfcde, a variant of \rfcde\ \citep{pospisil2019}, that can accommodate functional features $\x$ by partitioning in the  continuous domain of such features.
The spectral energy distribution (SED) of a galaxy is its energy as a function of continuous wavelength $\lambda$ 
of light; hence it can be viewed as functional data. 
Another example of functional data is the shear correlation function of weak lensing, which measures the mean product of the shear at two points as a function of a continuous distance $r$ between those points. Similarly, any function of continuous time is an example of functional data.
Treating functional features (like spectra, correlation functions or images) as unordered multivariate vectors on a grid suffers from a curse of dimensionality.
As the resolution of the grid becomes finer the dimensionality of the data increases but little additional information is added, due to high correlation between nearby grid points. 
\frfcde\ adapts to this setting by partitioning the  domain of each functional feature (or curve) into intervals, and passing the mean values of the function in each interval as input to \rfcde.  
Feature selection is then effectively done over regions of the domain rather than over single points. 
More specifically, the partitioning in \frfcde\ is governed by the 
parameter
$\mu$ of a Poisson process, with each functional feature entering as a high-dimensional vector $\mathbf{x} = (x_1,\ldots,x_d)$. 
Starting with the first element of the vector, we group the first
\(\operatorname{Poisson}(\mu)\) elements together.
We then repeat the procedure sequentially until we have assigned all $d$ elements into a group; 
this effectively partitions the function domain into disjoint intervals $\left\{(l_{i}, h_{i})\right\}$. 
The function mean values 
or smoothed brightness measurements $\widetilde{x}_i \equiv \int_{l_{i}}^{h_{i}} f(\lambda) d\lambda$ of each interval are finally treated as new inputs to a standard  (vectorial) \rfcde\ tree.
The splitting of the smoothed predictors $\widetilde{x}_i$ is done independently for each tree in the forest.  
Other steps of \frfcde, such as the computation of variable importance, also proceed as in (vectorial) \rfcde\ but with the averaged values of a region as inputs. 
As a result, \frfcde\ has the capability of identifying the functional inputs and the regions in the input domain that are important for estimating the response $\y$.
Figure \ref{fig:rfcde_viz} shows schematically the differences and similarities in construction between standard \rfcde\ and its \frfcde\ variant.

The \frfcde\ method is as scalable as standard random forests, accommodating training sets on the order of $10^6$. 
As the examples in Section~\ref{sec:functional} show, we can obtain substantial gains with a functional approach, both in terms of statistical performance (that is, CDE loss) and computational time. 
In addition, the change in code syntax is minimal, as one only needs to pass the  
Poisson $\mu$ parameter as the \texttt{flambda} argument during the forest initialization. 
Examples in \python\ and \rlang\ are provided in Appendix~\ref{app:frfcde}.

\begin{figure}
    \centering
    \includegraphics[width=0.5\textwidth]{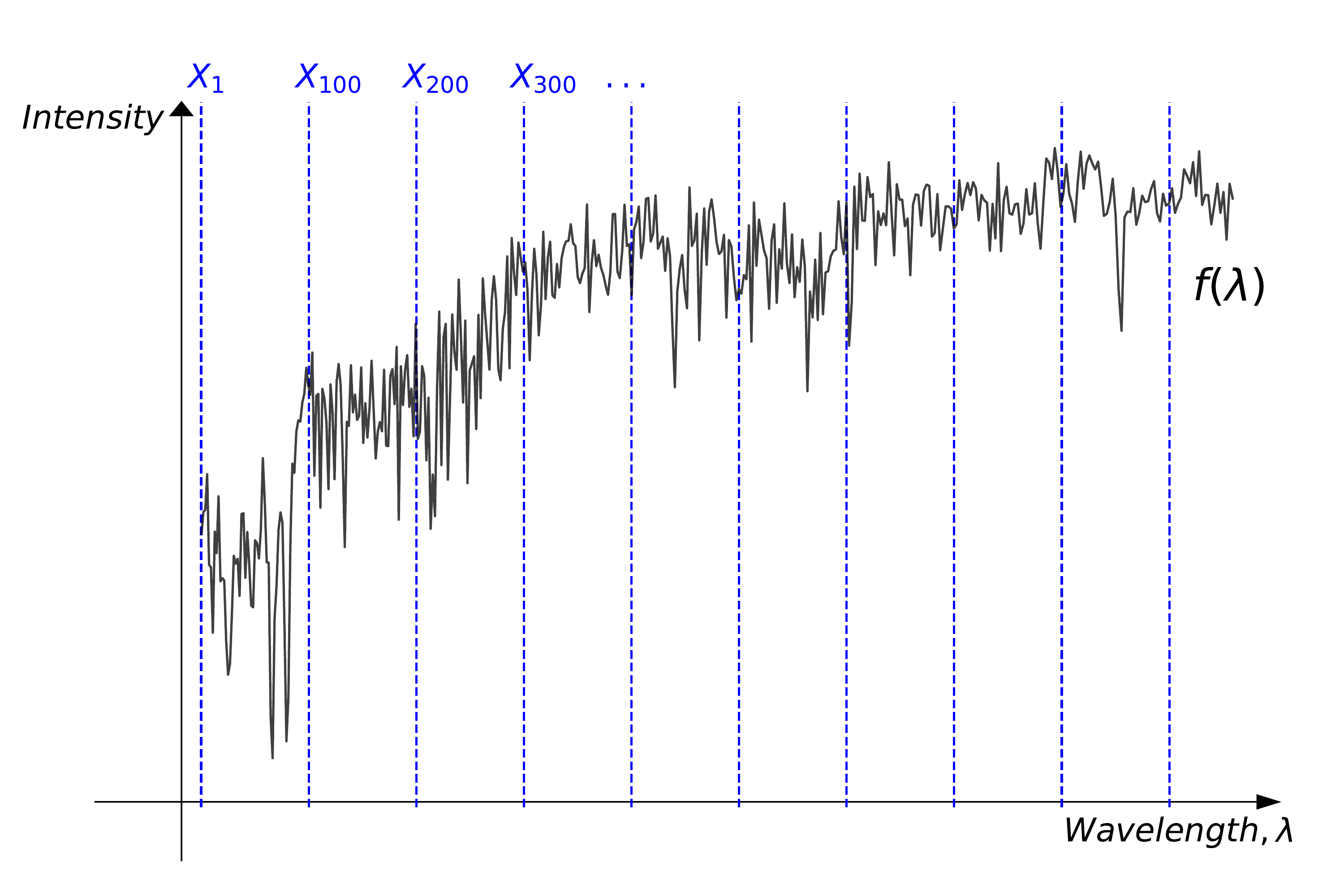}%
    \includegraphics[width=0.5\textwidth]{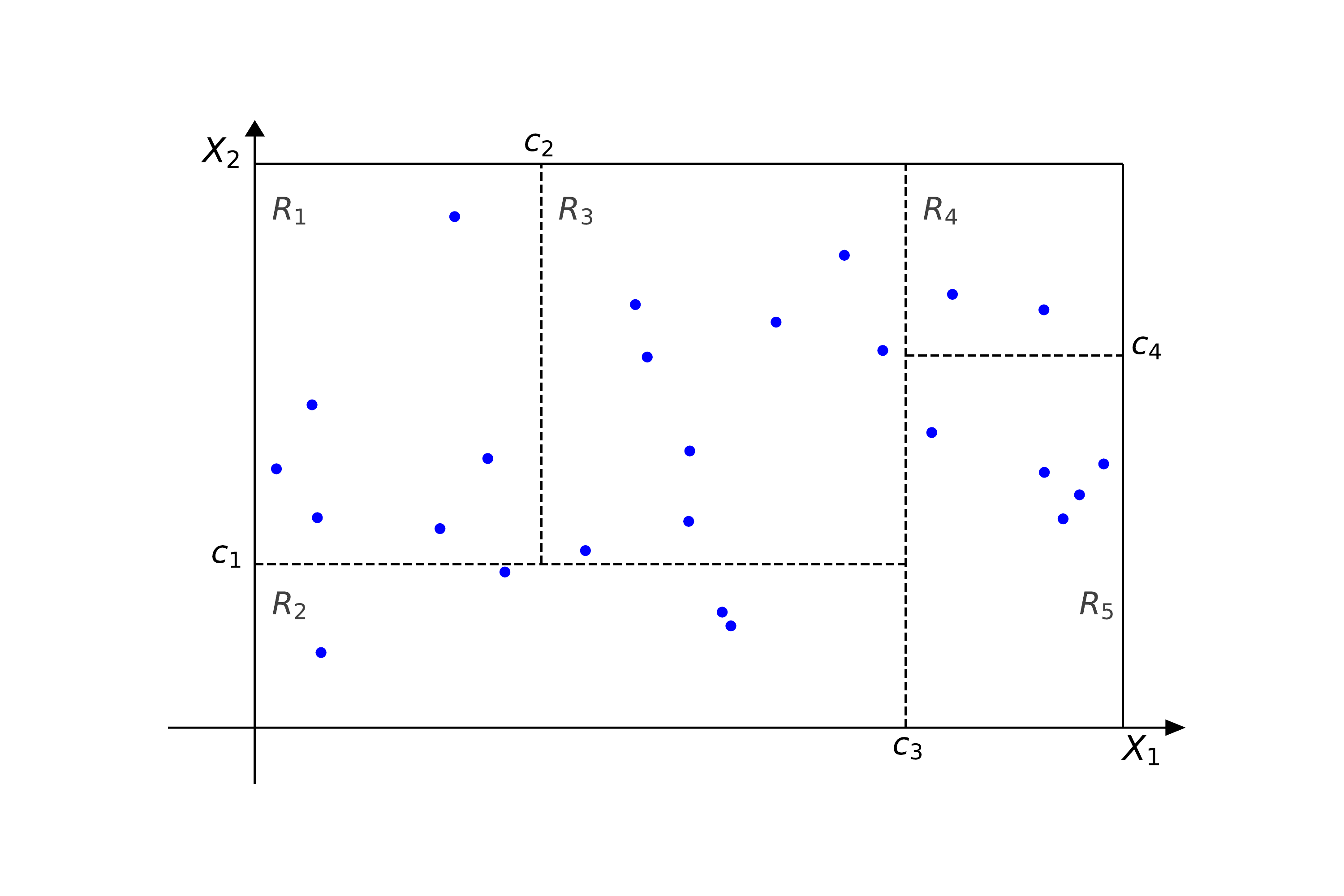}\\
    \includegraphics[width=0.5\textwidth]{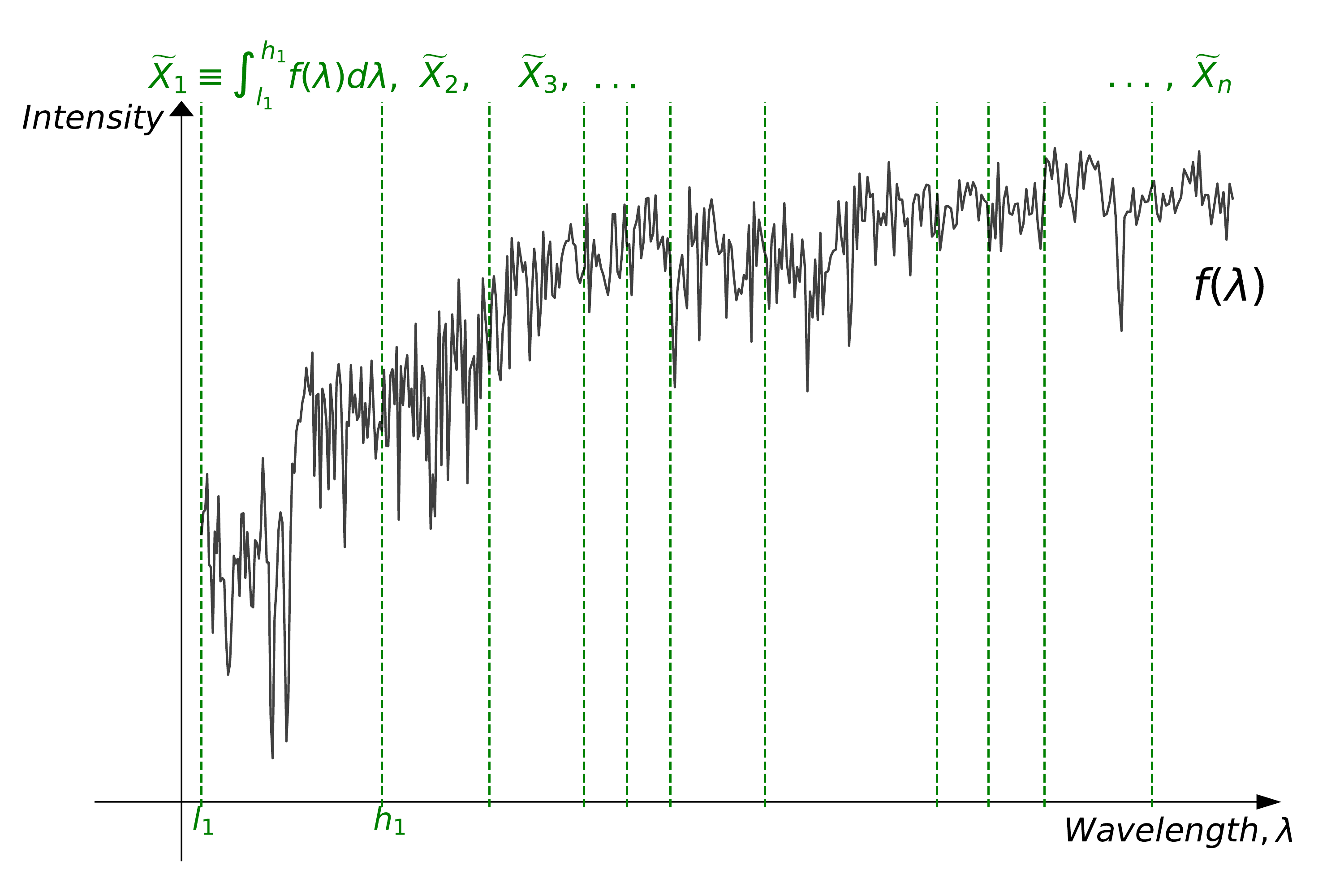}%
    \includegraphics[width=0.5\textwidth]{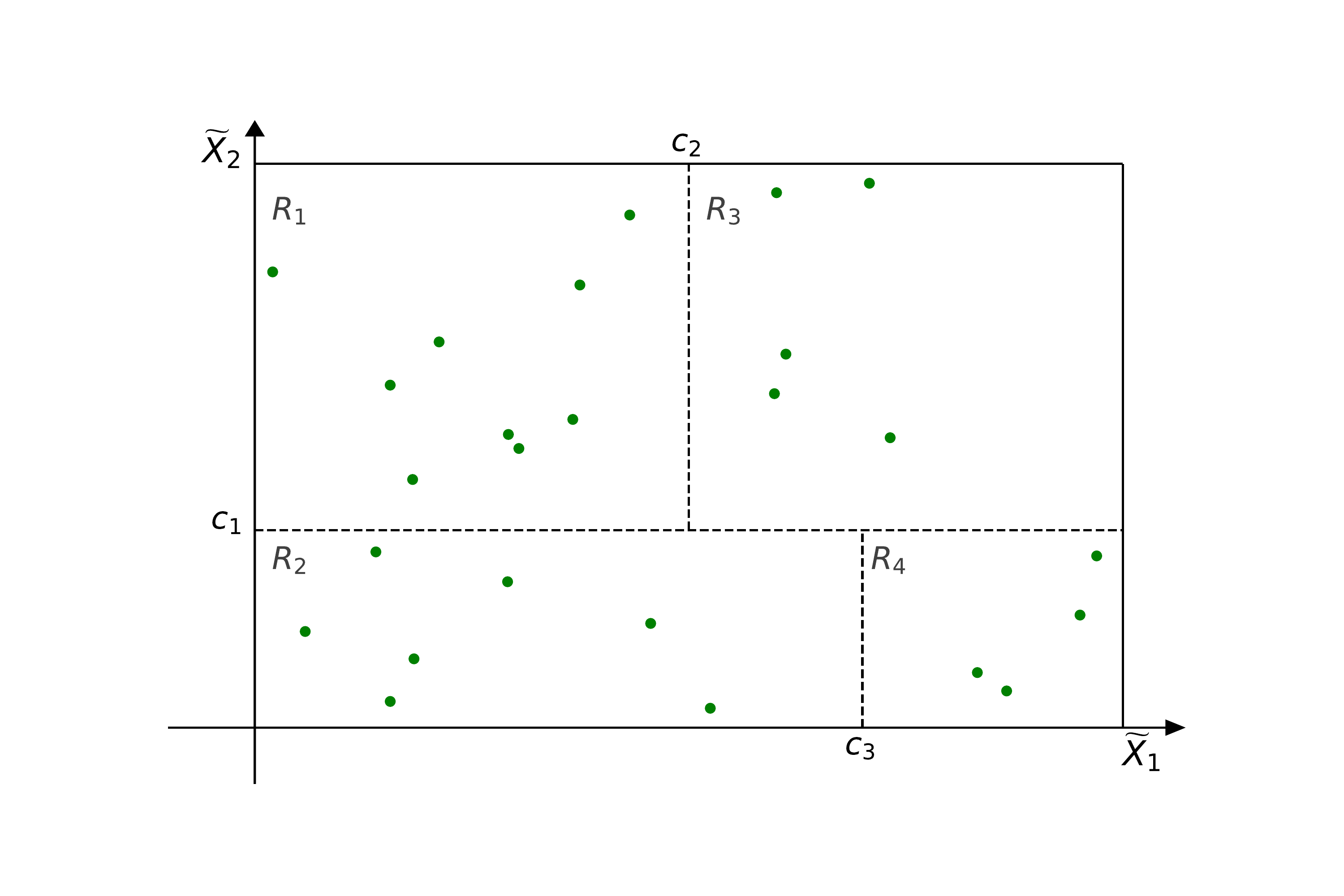}
    \caption{ 
    A schematic diagram of \rfcde\ (top row) and \frfcde\ (bottom row) applied to a galaxy spectrum from \cite{astroML2012}. 
    \emph{Top row}: 
    \rfcde\ treats the intensity $x_i$ at each recorded wavelength $\lambda_i$ of the spectrum as a feature or ``input'' to the random forests algorithm --- the blue vertical dashed lines indicate every 100$^{\rm th}$ recorded wavelength. 
    \rfcde\ then builds an ensemble of CDE trees, where each tree partitions the feature space according to the CDE loss, as illustrated in the top right figure for features $x_1$ and $x_2$. 
    \emph{Bottom row}: 
    \frfcde\ instead groups nearby measurements together where the group divisions are defined by a Poisson process with parameter $\mu$ (vertical green dashed lines, left figure). 
    The new smoothed features $\widetilde{x_i}$ are computed by integrating the intensity over the grouped wavelengths. 
    A forest of CDE trees is then built using the same construction as in \rfcde\ but with the smoothed features as inputs (bottom right figure).
    }\label{fig:rfcde_viz}
\end{figure}

\subsection{\flexcode}
\label{sec:FlexCode}

Introduced by \cite{izbicki2017converting}, \flexcode\footnote{ \url{https://github.com/tpospisi/FlexCode} (\python) and \url{https://github.com/rizbicki/FlexCoDE} (\rlang)} 
is a CDE method that uses a basis expansion for the univariate response $y$ and poses CDE as a series of univariate regression problems. 
The main advantage of this method is its flexibility as any regression method can be applied towards CDE, enabling us to tailor our choice of regression  method to the intrinsic structure and type of data at hand.

More precisely, let $\left\{\phi_{j} (y) \right\}_j$ be an orthonormal basis like a Fourier or wavelet basis for
functions of $y \in \mathbb{R}$.
The key idea of \flexcode\ is to express the unknown density $\pr{y \gvn \mathbf{x}}$ as a basis expansion
\begin{equation}
\label{eq:flexcode}
  p(y \gvn \mathbf{x}) = \sum_{j} \beta_{j}(\mathbf{x}) \phi_{j}(y) .
\end{equation}
By the orthogonality property of the basis, the (unknown) expansion coefficients $\{\beta_{j}(\mathbf{x})\}_j$ are then just orthogonal projections of $p(y|\x)$ onto the basis vectors. 
We can estimate these coefficients using a training set of $(\mathbf{x},y)$ data by {\em regressing} the transformed response variables $\phi_j(y)$ on predictors $\mathbf{x}$ for every basis function $j$ (see \cite{izbicki2017converting} Equation 2.2, for details). 
The number of basis function $n_{basis}$ is chosen by minimizing a CDE loss function on validation data.
The estimated density, $\sum_{j=1}^{n_{basis}} \widehat{\beta}_j(\mathbf{x}) \phi_j(y)$, may contain small spurious bumps induced by the Fourier approximation and may not integrate to one.
We remove such artifacts as described in \citet{izbicki2016nonparametric} by applying a thresholding parameter $\delta$ chosen via cross-validation.
\flexcode\ turns a challenging density estimation problem into a simpler regression problem, where we can choose any regression method that fits the problem at hand. 

To provide a concrete example, for high-dimensional $\x$ (such as galaxy spectra) we can use methods such as sparse or spectral series (``manifold'') regression methods \citep{tibshirani1996regression,ravikumar2009sparse,lee2016spectral}; 
see Section~\ref{sec:functional} for an example with \flexcode\code{-Series}. 
For multi-probe studies with mixed data types, we can use random forests regression \citep{breiman2001random}. 
On the other hand, large-scale photometric galaxy surveys such as \lsst\ 
require methods that can work with data from millions, if not billions, of galaxies. 
\citet{desc_photoz} present the results of an initial study of the \lsst\ Dark Energy Science Collaboration (\textsc{LSST-DESC}).
Their initial data challenge (``\Pz\ DC 1'') compares the CDEs of a dozen \pz\ codes run on simulations of \lsst\ galaxy photometry catalogs in the presence of complete, correct, and representative training data.
\flexzboost, a version of \flexcode\ based on the scalable gradient boosting regression technique by \cite{chen2016xgboost}, was entered into the data challenge because of the method's ability to scale to massive data. 
In the DC1 analysis, \flexzboost\ was among the the strongest performing codes according to established performance metrics of such PDFs and was 
one of only two codes to beat the experimental control under a more discriminating metric, the CDE loss.

For massive surveys such as \lsst, \flexcode\ also has another advantage compared to other CDE methods, namely its compact, lossless storage format.
\citet{juric_data_2017} establishes that \lsst\ has allocated $\sim$100 floating point numbers to quantify the redshift of each galaxy.
As is shown in \citet{desc_photoz}, the myriad methods for deriving \pz\ PDFs yield radically different results, motivating a desire to store the results of more than one algorithm in the absence of an obvious best choice.
For the \pz\ PDFs of most codes, one may need to seek a clever storage parameterization to meet \lsst's constraints \citep{carrasco_kind_sparse_2014, malz_approximating_2018}, but \flexcode\ is virtually immune to this restriction.
Since \flexcode\ relies on a basis expansion, one only needs to store $n_{basis}$ coefficients per target density for a lossless compression of
the estimated PDF with no need for binning. 
Indeed, for DC1, we can with \flexzboost\ reconstruct our estimate $\widehat{p}(z \gvn \mathbf{x})$ at any resolution from estimates of the first 35 coefficients in a Fourier basis expansion. 
In other words, \flexzboost\ enables the creation and storage of high-resolution \pz\ catalogs for several billion galaxies at no added cost.

Our public implementation of \flexcode
--- available in both \python\ \citep{Flexcode_code} and \rlang\ \citep{Flexcode_codeR},  respectively --- cross-validates over regression tuning parameters (such as the number of nearest neighbors $k$ in \flexcode\code{-kNN}) and computes the \flexcode\ coefficients in parallel for further time savings.  
The computational complexity of \flexcode\ will be the same as $n_{basis}$ parallel individual regressions.  
In particular, the scalability of \flexcode\ is determined by the underlying regression method. 
A scalable method like XGBoost leads to scalable \flexcode\ fitting. 
In the \python\ version of the code, the user can choose between the following regression methods: XGBoost for \flexzboost\, but also nearest neighbors, LASSO \citep{tibshirani1996regression}, and random forests regression \citep{breiman2001random}.  
In the \rlang\ version, the following choices are available: 
XGBoost, nearest neighbors, LASSO, random forests, Nadaraya-Watson kernel smoothing \citep{nadaraya1964estimating,watson1964smooth}, sparse additive models \citep{ravikumar2009sparse}, and spectral series (``manifold'') regression \citep{lee2016spectral}. 
In both implementations, the user may also use a custom regression method; 
we illustrate how this can be done with a vignette in both packages. 
For the \python\ implementation, the user can add any custom regression method following the \code{sklearn} API, i.e., with \texttt{fit} and \texttt{predict} methods.
Examples in both languages are presented in Appendix~\ref{app:flexcode}.

\subsection{\deepcde}
\label{sec:DeepCDE}

Recently, neural networks have reemerged as a powerful tool for prediction settings where large amounts of representative training data are available; 
see \cite{lecun2015deep} and \cite{goodfellow2016deep} for a full review.
Neural networks for CDE, such as Mixture Density Networks (MDNs; \citealt{bishop1994mixture}) and variational methods \citep{Tang2013varationalCDE, Sohn2015generativeCDE}, usually assume a Gaussian or some other parametric form of the conditional density. 
MDNs have lately also been used for photometric redshift estimation \citep{d2018photometric, pasquet2019photometric} and for direct estimation of likelihoods and posteriors in cosmological parameter inference (see \citealt{alsing2019fast} and references within). 

\deepcde\footnote{\url{https://github.com/tpospisi/DeepCDE}} \citep{DeepCDE_code} takes a different, {\em fully nonparametric} approach to CDE. 
It combines the advantages of basis expansions with the flexibility of neural network architectures, allowing for data types like image features and time-series data.
\deepcde\ is based on the orthogonal series representation in \flexcode, given in Equation \ref{eq:flexcode}, but rather than relying on regression methods to estimate the expansion coefficients in Equation~\ref{eq:flexcode}, \deepcde\ computes the coefficients $\left\{\beta_{i}(\mathbf{x})\right\}_{i=1}^{B}$ {\em jointly} with a  neural network that minimizes the CDE loss in Equation~\ref{eq:cde-loss}. 
Indeed, one can show that for an orthogonal basis, the problem of minimizing this CDE loss is (asymptotically) equivalent to finding the best basis coefficients in \flexcode\ under mean squared error loss for the individual regressions; 
see Appendix~\ref{app:cdeloss_equiv} for a proof.
The value of this result is that \deepcde\ with a CDE loss directly connects prediction with uncertainty quantification, implying that one can leverage the state-of-the-art deep architectures for an application at hand toward uncertainty quantification for the same prediction setting.

From a neural network architecture perspective, \deepcde\ only adds a linear output layer of coefficients for a series expansion of the density according to 
\begin{equation}
 \label{eq::expansion_DeepCDE}
   \hat{p}(y \gvn \mathbf{x}) = \sum_{j=1}^{B} \widehat{\beta}_{j}(\mathbf{x}) \phi_{j}(y) ,
\end{equation}
where $\{\phi_{j}(y)\}_{j=1}^B$ is an orthogonal basis for functions of $y \in \mathbb{R}$. 
Like \flexcode, we normalize and remove spurious bumps from the final density estimates according to the procedure in Section 2.2 of \cite{izbicki2016nonparametric}.

The greatest benefit of \deepcde\ is that it can be implemented with both convolutional and recurrent neural network architectures, extending to both image and sequential data.
For most deep architectures, adding a linear layer represents a small modification, and a negligible increase in the number of parameters.  
For instance, with the AlexNet architecture \citep{Krizhevsky2012Alexnet}, 
a widely used, relatively shallow convolutional neural network, adding a final layer with 30 coefficients for a cosine basis only adds $\sim 120,000$ extra parameters. 
This represents a $0.1\%$ increase over the $62$ million already existing parameters, and hence a negligible increase in training and prediction time. 
Moreover, the CDE loss for \deepcde\ is especially easy to evaluate; see Appendix \ref{app:cdeloss_equiv} for details.

We provide both \code{TensorFlow} and \code{Pytorch} implementations of \deepcde\ \citep{DeepCDE_code}.
We also include examples that shows how one can easily build \deepcde\ on top of AlexNet \citep{Krizhevsky2012Alexnet}; 
in this case, for the task of estimating the probability distribution $\pr{y \gvn \mathbf{x}}$ of the correct orientation $y$ of a color image $\mathbf{x}$. 
Note that \cite{Fischer2015ImageOE} use AlexNet for the corresponding regression task of predicting the orientation $y$ for a color image $\mathbf{x}$ but without quantifying the uncertainty in the predictions.

\section{How to Assess Method Performance}
\label{sec:assessment}

After fitting CDEs, it is important to assess the quality of our models of uncertainty. 
For instance, after computing \pz\ PDF estimates $\widehat{p}(z \gvn \mathbf{x})$ for some galaxies, one may ask whether these estimates accurately quantify the true uncertainty distributions $\pr{z\gvn \mathbf{x}}$. 
Similarly, in the LFI task, a key question is whether an estimate of the posterior distribution, $\widehat{p}(\boldsymbol{\theta} \gvn \xobs)$ of the cosmological parameters is close enough to the true posterior $\pr{\boldsymbol{\theta} \gvn \xobs}$ given the observations $\xobs$.

We present two method-assessment tools suitable to different situations, which are complementary and can be performed simultaneously, with public implementations in the \cdetools\footnote{\url{https://github.com/tpospisi/cdetools}} package in both \python\ and \rlang\ \citep{cdetools_code0}.
First, we describe a CDE loss function that directly provides relative comparisons between conditional density estimators, such as the methods presented in Section \ref{sec:cde_methods} or, equivalently, between a set of models (for the same method) with different tuning parameters.
Second, we describe visual diagnostic tools, such as Probability Integral Transforms (PIT) and Highest Probability Density (HPD) plots, that can provide insights on the overall goodness-of-fit of a given estimator to observed data.

\subsection{CDE loss}
\label{sec:cde_loss}

Here we briefly review the CDE loss from \cite{izbicki2016nonparametric} for assessing conditional density estimators and discuss it in the context of the cosmology LFI case.

The goal of a loss function is to provide relative comparisons between different estimators, so that it is easy to directly choose the best fitted model among a list of candidates. 
Given an estimate $\widehat{p}$ of $p$, we define the CDE loss by
\begin{equation}
  \label{eq:cde-loss}
  L(\widehat{p}, p) = \int \int \left[\widehat{p}(\y | \x) - p(\y | \x)\right]^{2} d\y dP(\x) ,
\end{equation}
where $P(\mathbf{x})$ is the marginal distribution of the features $\x$. 
This loss is the CDE analog to the standard mean squared error (MSE) in standard regression.
The weighting by the marginal distribution of the features emphasizes that errors in the estimation of $\y$ for unlikely features $\x$ are less important.
The CDE loss cannot be directly evaluated because it depends on the unknown true density $p(z | \x)$. However, one can estimate the loss (up to a constant determined by the true $p$) by
\begin{equation}
  \label{eq:estimated-cde-loss}
  \hat{L}(\widehat{p}, p) = \frac{1}{n} \sum_{i=1}^{n} \int \widehat{p}(\y | \mathbf{x}_{i}^{te})^{2} d\y - \frac{2}{n} \sum_{i=1}^{n} \widehat{p}(\y_{i}^{te} | \mathbf{x}_{i}^{te}) ,
\end{equation}
where $\left\{(\mathbf{x}_{i}^{te}, \y_{i}^{te})\right\}_{i=1}^{n}$ represents our validation or test data, i.e., a held-out set not used to construct $\widehat{p}$. 
In our implementation, the function \texttt{cde\_loss} returns the estimated CDE loss as well as an estimate of the standard deviation or the {\em standard error} (SE) of the estimated loss. 

{\bf CDE loss for LFI.} In LFI settings, we use a slightly different version of the CDE loss in Eq.~\ref{eq:cde-loss}. Because the goal (in ABC) is to approximate the posterior density $p(\boldsymbol{\theta} \gvn \xobs)$ at fixed $\x=\xobs$, a natural evaluation metric is the integrated squared error loss
\begin{equation}
  \label{eq:abc-loss}
  \int \left[\hat{p}(\boldsymbol{\theta} \gvn \xobs) - p(\boldsymbol{\theta} \gvn \xobs)\right]^{2} d\boldsymbol{\theta}
\end{equation}
of the conditional density \emph{at $\xobs$ only}. 
Estimating this loss can however be tricky as only a single instance of data with $x=\xobs$ is available in practice. 
Hence, \cite{izbicki2019abc} approximates Equation \ref{eq:abc-loss} by computing the empirical loss $\widehat{L}(\widehat{p}, p)$ in Eq.~\ref{eq:estimated-cde-loss} over a restricted subset of the validation data that only includes the  $\mathbf{x}_{i}^{te}$ points that fall in an $\epsilon$-neighborhood of $\xobs$, where $\epsilon$ is the tolerance of the ABC rejection algorithm. 
The detailed analysis of this approach can be found in \cite{izbicki2019abc}.

\subsection{PIT and HPD diagnostics} 
\label{sec:hpd}

The CDE loss function is a relative measure of performance that cannot address absolute goodness-of-fit.
To quantify overall goodness-of-fit, we examine how well calibrated an ensemble of conditional density estimators are on average, over validation or test data $\left\{(\mathbf{x}_{i}^{te}, \y_{i}^{te})\right\}_{i=1}^{n}$. For ease of notation, we will in this section denote $\mathbf{x}_{i}^{te}$ and  $\y_{i}^{te}$ for a generic $i$ by $\x_{val}$ and $\y_{val}$.

Given a true probability density $p(\y \gvn \x) = \gamma_{0}$ of a variable $\y$ conditioned on data $\x$, an estimated probability density $\widehat{p}(\y \gvn \x) = \gamma$ cannot be {\em well-calibrated} unless $\gamma \approx \gamma_{0}$.
Built on the same logic, the probability integral transform (PIT; \citealt{Polsterer})
\begin{equation}
PIT(\xobsph, \yobsph) = \int_{-\infty}^{\yobsph} \hat{p}(y \gvn \xobsph) dy 
\end{equation}
assesses the calibration quality of an ensemble of CDEs for scalar $y$ representing the cumulative distribution function (CDF) of $\hat{p}(y \gvn \xobsph)$ evaluated at $y = \yobsph$;  this PIT value corresponds to the shaded area in Figure \ref{fig:pit_hpd_visualization}, left. 
A statistically self-consistent population of densities has a uniform distribution of PIT values, and deviations from uniformity indicate inaccuracies of the estimated PDFs. 
Overly broad CDEs manifest as under-representation of the lowest and highest PIT values, whereas overly narrow CDEs manifest as over-representation of the lowest and highest values.

However, PIT values do not easily generalize to multiple responses.
For instance, for a bivariate response $\y = (z, \eta)$, the quantity $\int_{-\infty}^{\zobsph} \int_{-\infty}^{\etaobsph} \pr{z, \eta \gvn \xobsph} dz d\eta$ is not in general uniformly distributed \citep{genest2001multivariate}.
An alternative statistic that easily generalizes to multivariate $\y$ is the highest probability density value (HPD; \citealt[Appendix A]{izbicki2017photo}):
\begin{equation} \label{eq:HPD}
\xi(\xobsph, \Yobsph) = \int_{\textbf{y}:\hat{p}(\textbf{y} \gvn \xobsph) \ge \hat{p}(\Yobsph \gvn \xobsph)} \hat{p}(\textbf{y} \gvn \xobsph) d\textbf{y} .
\end{equation}
The HPD value is based on the definition of the {\em highest density region} (HDR, \citealt{hyndman1996computing}) of a random variable $\y$; 
that is, the  subset of the sample space of $\y$ where all points in the region have a probability above a certain value. 
The HDR of $\textbf{y} \gvn \xobsph$ can be seen as a region estimate of  $\textbf{y}$ when $\x=\xobsph$ is observed.
In words, the set $\{\textbf{y}:\hat{p}(\textbf{y} \gvn \xobsph) \ge \hat{p}(\Yobsph \gvn \xobsph)\}$ is the smallest HDR containing the point $\Yobsph$, and the HPD value 
is simply the probability 
of such a region. 
Figure \ref{fig:pit_hpd_visualization}, right, shows a schematic diagram of the HPD value (green shaded area) and HDR region (highlighted segments on the y-axis) for the estimated density $\hat{p}(\textbf{y} \gvn \xobsph)$. 
The HPD value $\xi(\xobs, \Yobsph)$ can also be viewed as a measure of how plausible $\Yobsph$ is according to $\hat p (\y | \xobsph)$ and is directly related to the Bayesian analog of p-values or {\em the e-value} \citep{pereira1999}. 
One can show \citep{harrison2015validation} that even for multivariate $\y$, the
HPD values for validation data  follow a $U(0,1)$  distribution if the CDEs are well-calibrated on the population level. Thus, these values can also be used for assessing the fit of conditional densities in the same way as PIT values.
In our implementation the functions \texttt{pit\_coverage(cde\_test, y\_grid, y\_test)} and \texttt{hpd\_coverage(cde\_test, y\_grid, y\_test)}, respectively, calculate PIT and HPD values for CDE estimates \texttt{cde\_test} using the grid \texttt{y\_grid} with observed values \texttt{y\_test}.  

The PIT and HPD are not without their limitations, however, as demonstrated in the control case of \citet{desc_photoz} and Figure~\ref{fig:setting1_diagnostics} of Section~\ref{sec:univ_response_mult_data} here.
Because the PIT and HPD values can be uniformly distributed even if $\pr{\y \gvn \mathbf{x}}$ is not well estimated, they must be used in conjunction with loss functions for method assessment.
A popular way of visualizing PIT and HPD diagnostics for the entire population is through   {\em  probability-probability plots} or P-P plots of the empirical distribution of the  (PIT or HPD) statistic  versus its distribution under the hypothesis that $\hat p(\y | \x)= p(\y | \x)$; henceforth, we will refer to the latter  Uniform(0,1) distribution as the ``theoretical'' distribution of PIT or HPD.  An ideal P-P plot has all points close to the identity line where the ``empirical'' and ``theoretical'' distributions are the same. Note that HPD P-P plots, in particular, are valuable calibration tools if our goal is to calibrate the estimated densities so that the computed  predictive regions have the right coverage.

Figure \ref{fig:teddy-pvals} (c)-(d) contains example P-P plots for the PIT and HPD values of \pz\ estimators, where the empirical (observed) distribution is close to the theoretical (ideal) distribution.

\begin{figure}
    \centering
    \includegraphics[width=0.5\textwidth]{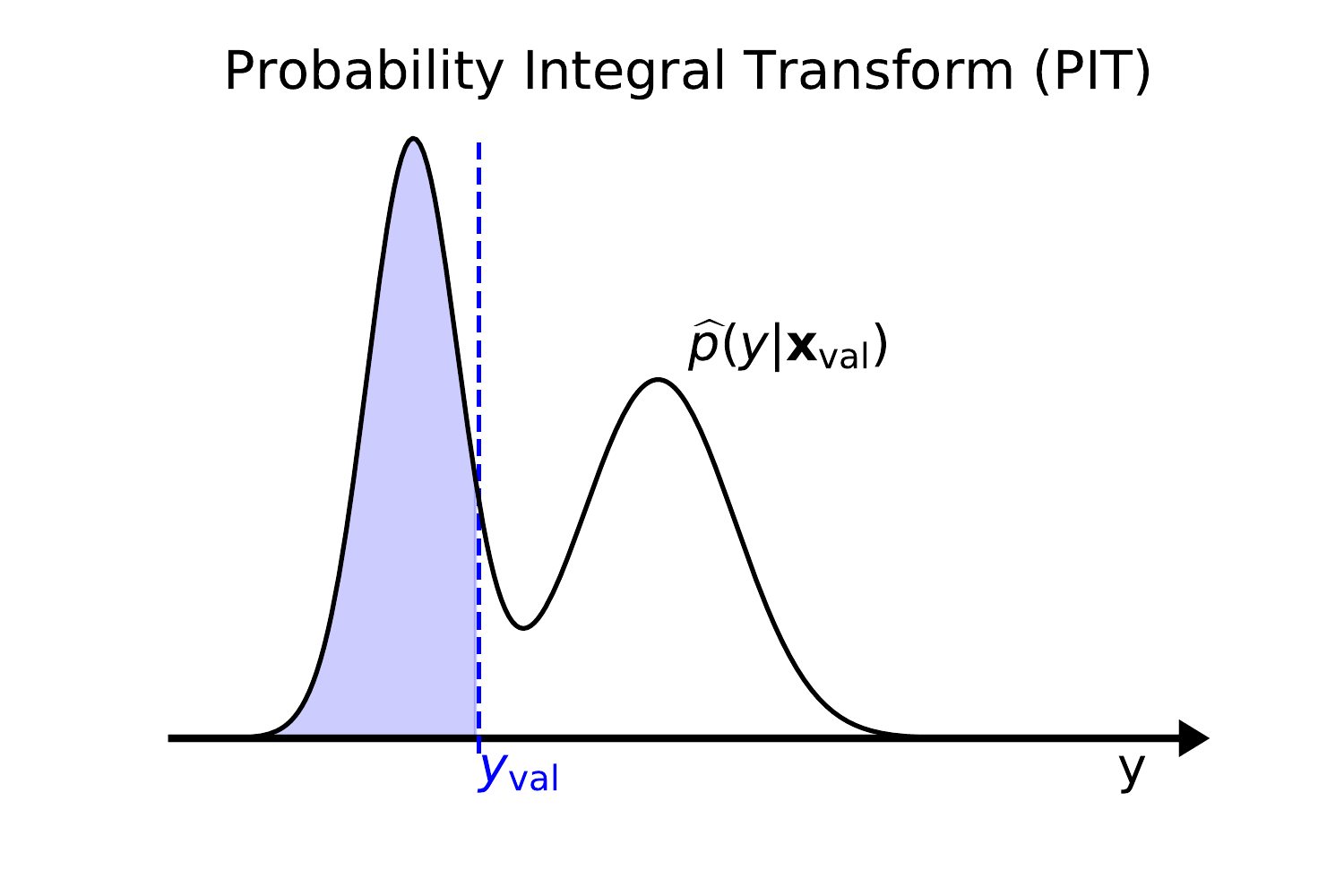}%
    \includegraphics[width=0.5\textwidth]{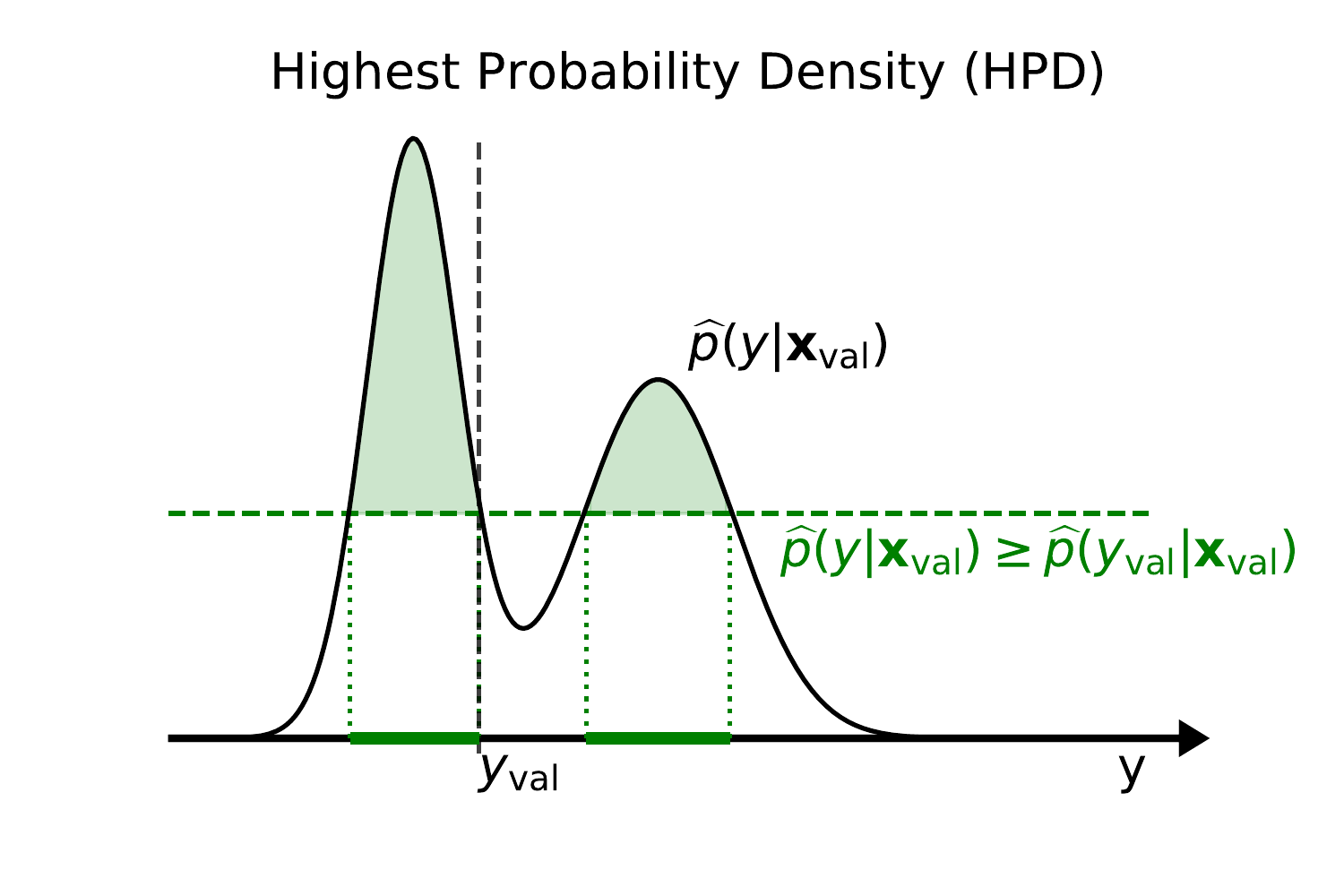}
    \caption{ 
    Schematic diagram of the construction of the Probability Integral Transform (PIT, left) and  the Highest Probability Density (HPD, right) values for  the estimated density $\widehat{p}(y|\x)$ at $\x=\xobsph$, where  $\yobsph$ is the response at $x=\xobsph$. In the  plot to the right, the highlighted segments on the $y$-axis form the so-called highest density region (HDR) of $y|\xobsph$. The PIT and HPD values correspond to the area of the tail versus highest density region, respectively, of the estimate; here indicated by the blue versus green shaded areas.
    }\label{fig:pit_hpd_visualization}
\end{figure}

\newpage

\section{Applications in Astronomy}
\label{sec:examples}

We demonstrate\footnote{Code for these examples is publicly available at \url{https://github.com/Mr8ND/cdetools_applications}.} the breadth of our CDE methods in three different astronomical use cases.

\begin{enumerate}
  \item 
  {\bf \Pz\ estimation: univariate response with multivariate input.}
  This is the standard prediction setting in which all four methods apply. 
  In Section \ref{sec:univ_response_mult_data}, we apply the methods to the \teddy\ photometric redshift data \citep{beck2017realistic}, and illustrate the need for loss functions to properly assess PDF estimates of redshift $z$ given photometric colors $\x$.
  \item
  {\bf Likelihood-free cosmological inference: multivariate response.}
  For multiple response components we want to model the often complicated dependencies between these components;
  this is in contrast to approaches which model each component separately, implicitly introducing an assumption of conditional independence. 
  In Section \ref{sec:mult_response_mult_data}, we use an example of LFI for simulated weak lensing shear data to show how \nnkcde\ and \rfcde\ can capture more challenging bivariate distributions with curved structures; 
  in this toy example $\y$ represents cosmological parameters $(\Omega_M, \sigma_8)$ in the $\Lambda$CDM-model, and $\x$ represents (coarsely binned) weak lensing shear correlation functions.
  \item 
  {\bf Spec-$z$ estimation: functional input.}
  Standard prediction methods, such as random forests, do not typically fare well with functional features, simply treating them as unordered vectorial data and ignoring the functional structure.
  However, often there are substantial benefits to explicitly taking advantage of that structure as in \frfcde.  
  In Section \ref{sec:functional}, we compare the performance of a vectorial implementation of \rfcde\ with \frfcde\ and \flexcode\code{-Series} for a spectroscopic sample from the Sloan Digital Sky Survey (\sdss; \citealt{alam2015eleventh}). The input $\x$ is here a high-resolution spectrum of a galaxy, and the response is the galaxy's redshift  $z$.
\end{enumerate}

Throughout this section we will report CDE loss mean and standard error for each method, using $95\%$ Gaussian confidence intervals for performance comparison. 
Since the CDE loss is an empirical mean and the test size is reasonably large in all the examples, the validity of this approximation is guaranteed by the central limit theorem.

\subsection{\Pz\ estimation: Univariate response with multivariate input}
\label{sec:univ_response_mult_data}

Here we estimate \pz\ posterior PDFs $\pr{z \gvn \mathbf{x}}$ using representative training and test data (Samples A and B, respectively) from the \teddy\ catalog by \cite{beck2017realistic}.\footnote{Data available at \url{https://github.com/COINtoolbox/photoz_catalogues}} These data include 74309 training and 74557 test observations.
Each observation $\mathbf{x}$ has five features: 
the magnitude of the $r$-band and the pairwise differences or ``colors'' $u-g$, $g-r$, $r-i$, and $i-z$. 

Among the chief sources of uncertainty affecting photo-$z$s estimated by ML techniques are the incompleteness and nonrepresentativity of training sets, defined by the mismatch in the distributions of training and test data in $z$ and $\mathbf{x}$, which may be extreme to the point of not guaranteeing mutual coverage.
Realistically modeling incompleteness is highly challenging, requiring both simulations of SEDs and of the observational conditions of a given survey, which is outside the scope of this work.
Accounting for redshift incompletenss is not a lost cause and may be accomplished by extrapolating outside of the training range by abandoning standard instance-based ML algorithms (see e.g., \cite{leistedt2016_gp}). 
Certain types of selection bias, known as covariate shift, can also be corrected by importance weights in the CDE loss \citep{izbicki2017photo, freeman2017unified}; 
see Appendix~\ref{sec:cov_shift_photoz} for details and code.
However, for simplicity, in this paper we consider only representative training sets with no disparity in color-space coverage, putting this demonstration on equal footing with all previous comparisons of photo-$z$ PDF methods.

We fit \nnkcde, \rfcde, \deepcde, and {\flexzboost}  
to the data. 
In addition we fit an \rfcde\ model, ``\rfcde\code{-Limited},'' restricted to the first three of the five features, as a toy model which fails to extract some information from the features, allowing us to showcase the difference between PIT or HPD diagnostics and the CDE loss function. 
For \deepcde\ we use a three-layer deep neural network with linear layers and reLu activations with $25$ neurons per layers, trained for $10,000$ epochs with Adam \citep{Kingma2014AdamAM}. 
As our goal is to showcase its applicability we do not optimize the neural architecture (e.g., number of layers, number of neurons per layer, activation functions) nor the learning parameters (i.e., number of epochs, learning rate, momentum).
To illustrate our validation methods, we also include the marginal distribution $\widehat{p}(z) = \frac{1}{n}\sum_{i=1}^n \widehat{p}(z | \mathbf{x}_i)$ as an estimate of individual \pz\ distributions $p(z | \mathbf{x})$.
This estimate will be the same regardless of $\mathbf{x}$.

\begin{figure}[htbp]
\centering
\includegraphics[width=1\linewidth]{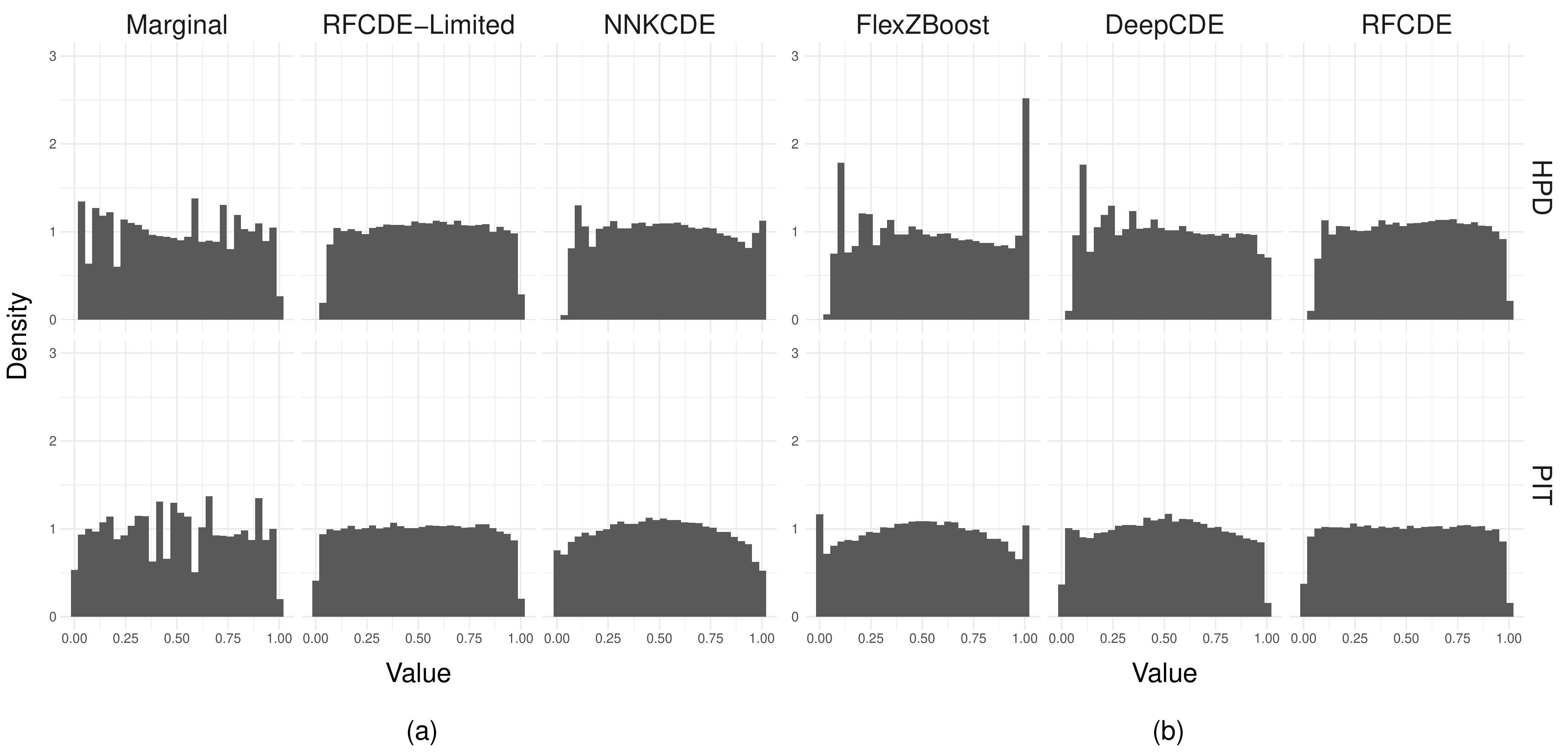}

\vspace{0.75cm}

\includegraphics[width=1\linewidth]{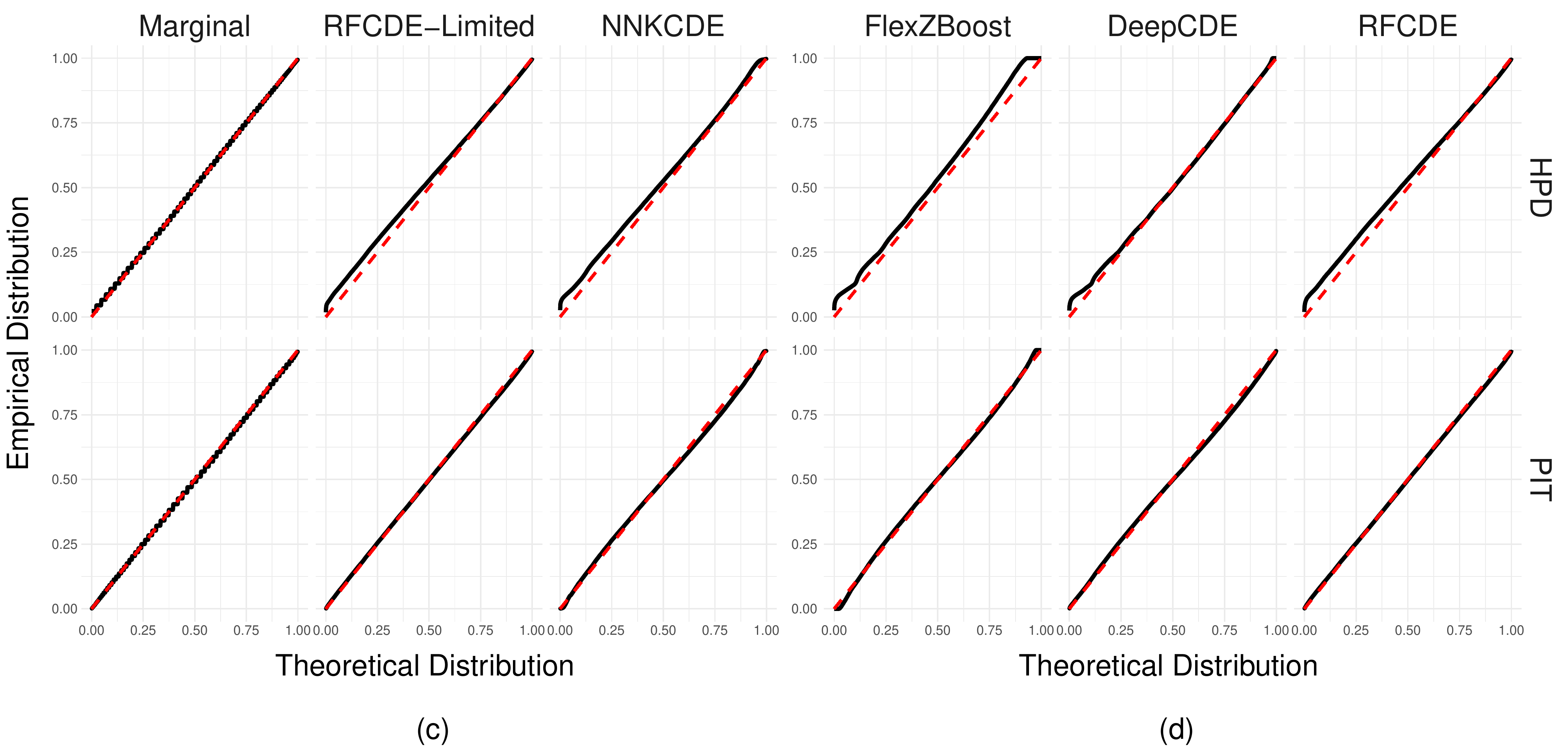}
\caption{
(a)-(b) Histograms of PIT and HPD values for \teddy\ \pz\ data. 
Both versions of \rfcde\ as well as the marginal distribution have a uniform distribution indicating  that the conditional density estimates are well-calibrated on average. 
\flexzboost\ exhibits an overrepresentation of 0 and 1 values which indicates overly narrow CDEs.
(c)-(d) Probability-Probability (P-P) plots of HPD and PIT values. Both sets of values are computed over   data in the test set  and their empirical distribution is plotted against the uniform $U(0,1)$ distribution; i.e., the ``theoretical distribution'' of the HPD and PIT values when $\hat{p}(z|\x)=p(z|\x)$. If the estimated CDEs are well calibrated, the  empirical and theoretical distributions should coincide and all points should be close to the identity line.
As in the top panel, the P-P plots indicate a good fit for all methods including the clearly misspecified ``Marginal'' model. 
 \label{fig:setting1_diagnostics}}\label{fig:teddy-pvals}
\end{figure}

\begin{figure}[htbp]
\begin{minipage}[t]{0.5\textwidth}
\vspace{0pt}
\centering
\includegraphics[width=1\textwidth]{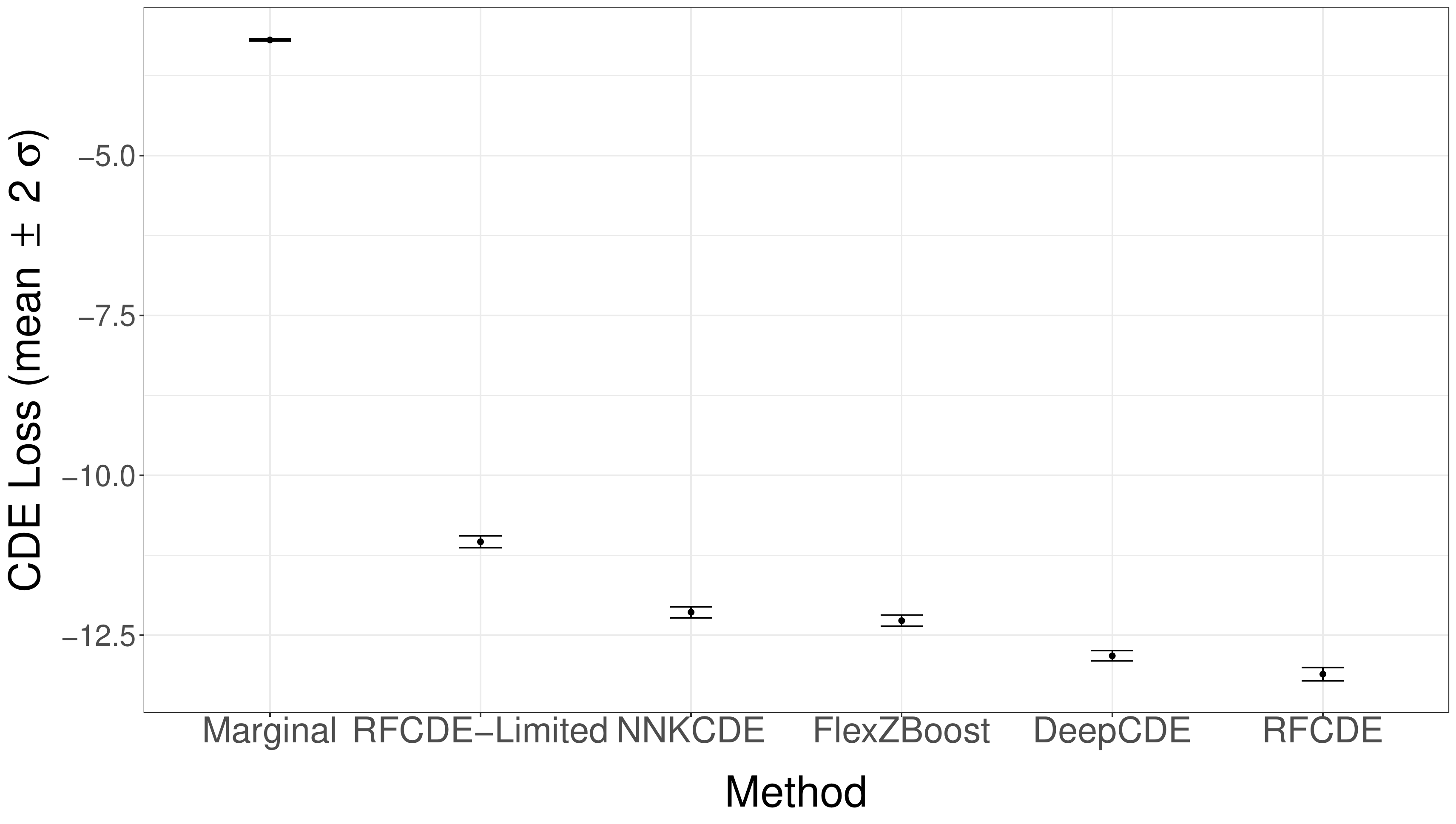}
\end{minipage}%
\begin{minipage}[t]{0.4\textwidth}
\vspace{10pt}
{\footnotesize
\begin{tabular}{llr}
Method & CDE Loss $\pm$ SE & Storage (single CDE)\\
\hline
Marginal & -3.192 $\pm$ 0.007 & 200 floats (1.6 KB)\\
RFCDE-Limited & -11.038 $\pm$ 0.047 & 200 floats (1.6 KB)\\
NNKCDE & -12.139 $\pm$ 0.043 & 200 floats (1.6 KB)\\
FlexZBoost & -12.272 $\pm$ 0.044 & 30 coeffs (0.24 KB)\\
DeepCDE & -12.821 $\pm$ 0.04 & 31 coeffs (0.25 KB)\\
RFCDE & -13.108 $\pm$ 0.052 & 200 floats (1.6 KB)
\end{tabular}
}
\end{minipage}
\caption{Method comparison via the CDE loss of Equation~\ref{eq:estimated-cde-loss} with estimated standard error (SE) and the storage space for each galaxy's \pz\ CDE.  
As \flexzboost\ and \deepcde\ are basis expansion methods, we need only to store the estimated coefficients for a lossless compression of the CDEs; 
the other CDEs are discretized to 200 bins.}\label{tab:teddy-cde-loss}
\end{figure}

Figure~\ref{tab:teddy-cde-loss} presents the estimated CDE loss of Equation~\ref{eq:estimated-cde-loss} with estimated standard error (SE), as well as the storage space for each galaxy's \pz\ CDE.
Figure \ref{fig:teddy-pvals} shows that the different models, including the clearly misspecified ``Marginal'' model, achieve comparable performance on goodness-of-fit diagnostics.
However, Figure~\ref{tab:teddy-cde-loss} shows the discriminatory power of the CDE loss function, which distinguishes the methods from one another.
This emphasizes the need for method comparison through loss functions in addition to goodness-of-fit diagnostics.
We note that the ranking in CDE loss also correlates roughly with the quality of the point estimates, as shown in Appendix \ref{sec:app_pointestimates}. 
The \lsst -DESC PZ DC1 paper \citep{desc_photoz} draws similar conclusions from a comprehensive photo-$z$ code comparison where the ``Marginal'' model (there referred to as \texttt{trainZ} photo-$z$ PDF estimator, the experimental control) outperformed all codes when using traditional metrics for assessing photo-$z$ PDF acccuracy. 
Indeed, of the metrics considered in DC1, the CDE loss was the only metric that could appropriately penalize the pathological \texttt{trainZ}.

\subsection{Likelihood-free cosmological inference: Multivariate response}
\label{sec:mult_response_mult_data}

To showcase the ability to target joint distributions, we apply \nnkcde\ and \rfcde\ to the problem of estimating multivariate posteriors $p(\boldsymbol{\theta} | \mathbf{x}_{\rm obs})$ of the cosmological parameters in a likelihood-free setting via ABC-CDE~\citep{izbicki2019abc}. 

ABC is an approach to parameter inference in settings where the likelihood is not tractable but we have a forward model that can simulate data $\mathbf{x}$ under fixed parameter settings $\boldsymbol{\theta}$. 
The simplest form of ABC is the ABC rejection sampling algorithm, where a set of parameters is first drawn from a prior distribution. 
The simulated data $\mathbf{x}$ is accepted with tolerance $\varepsilon \geq 0$ if $d(\mathbf{x},\mathbf{x}_{\rm obs}) \leq \varepsilon$ for some distance metric $d$ (e.g., the Euclidean distance) that measures the discrepancy between the simulated data $\mathbf{x}$ and observed data $\mathbf{x}_{\rm obs}$. 
The outcome of the ABC rejection algorithm for small enough $\epsilon$ is a sample of parameter values approximately distributed according to the desired posterior distribution $p(\boldsymbol{\theta} | \mathbf{x}_{\rm obs})$. 

The basic idea of ABC-CDE is to improve the ABC estimate --- and hence reduce the number of required simulations --- by using the ABC sample/output as input to a CDE method tuned with the CDE loss restricted to a neighborhood of $\xobs$ defined by the tolerance $\epsilon$ (\cite{izbicki2019abc}, Equation 3). 
Hence, in ABC-CDE, our CDE method can be seen as a post-adjustment method: 
it returns an estimate $\widehat{p}(\boldsymbol{\theta} | \mathbf{x})$ which we evaluate at the point $\mathbf{x}=\mathbf{x}_{\rm obs}$ to obtain a more accurate approximation of $p(\boldsymbol{\theta} | \mathbf{x}_{\rm obs})$. 
This could also be beneficial in an active learning setting where the posterior distribution is used to identify relevant regions of the parameter space (e.g., \citep{lueckmann2019likelihood, papamakarios2019sequential, alsing2019fast}).

In this example, we consider the problem of cosmological parameter inference via {\em cosmic shear}, caused by weak gravitational lensing inducing a distortion in images of distant galaxies. 
The size and direction of the distortion is directly related to the size and shape of the matter distribution along the line of sight, which varies across the universe.
We use shear correlation functions to constrain the dark matter density \(\Omega_{M}\) and matter power spectrum normalization \(\sigma_{8}\) parameters of the \(\Lambda\)CDM cosmological model, which predicts the properties and evolution of the large scale structure of the matter distribution.
For further background see \cite{hoekstra2008weak}, \cite{munshi2008cosmology} and \citet{mandelbaum2017weak}.
Here we use the \code{GalSim}\footnote{\url{https://github.com/GalSim-developers/GalSim}} toolkit \citep{rowe2015galsim} to generate simplified galaxy shears distributed according to a Gaussian random field determined by $(\Omega_{M}, \sigma_{8})$.
The binned shear correlation functions serve as our input data or summary statistics $x$.
For the inference, we assume uniform priors \(\Omega_{M} \sim U(0.1, 0.8)\) and \(\sigma_{8} \sim U(0.5, 1.0)\) and fix \(h = 0.7\), \(\Omega_{b} = 0.045\), \(z = 0.7\). 

\begin{figure}[htbp]
\centering
\includegraphics[width=0.8\textwidth]{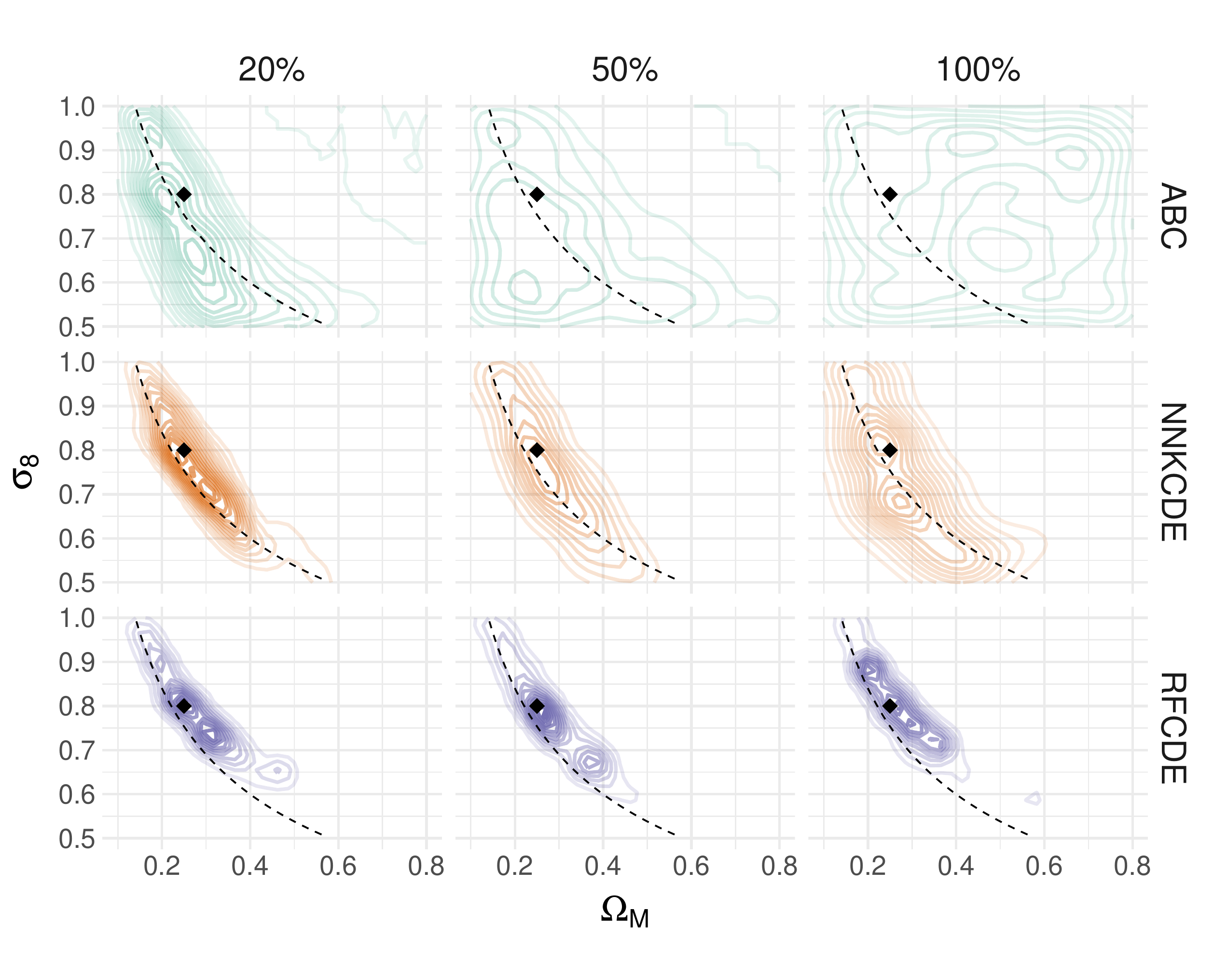}
\caption{\label{fig:orge983804} 
 Cosmological parameter inference in an LFI setting with simulated weak lensing data.  
 The top row shows the estimated bivariate distribution of $\Omega_M$ and $\sigma_8$ for ABC rejection sampling at different acceptance rates (20\%, 50\%, and 100\%). 
 The middle row shows the estimated posterior densities after applying \nnkcde\ to the ABC sample, and the bottom row when applying \rfcde.\ 
 Both \nnkcde\ and \rfcde\ tuned with a CDE loss improve on ABC across acceptance rate levels, i.e., they provide approximate posteriors that are more concentrated around the true observed parameter values and the degeneracy curve on which the data are indistinguishable (shown here as a black diamond and dashed line, respectively). 
 We even see some structure at an ABC acceptance rate of 1 (right column); 
 that is, at an ABC threshold of $\epsilon \rightarrow \infty$ for which the entire sample is accepted by ABC and passed to our CDE code.
 HPD values were not included as they fail to distinguish between conditional and marginal distribution, as mentioned above.}
\end{figure}

The top row of Figure~\ref{fig:orge983804} shows the estimated bivariate posterior distribution of $\boldsymbol{\theta}=(\Omega_{M},\sigma_{8})$ from ABC rejection sampling only, at varying acceptance rates (20\%, 50\%, and 100\%). 
An acceptance rate of 100\% just returns the (uniform) ABC prior distribution, whereas the ABC posteriors for smaller acceptance rates (that is, smaller values of $\varepsilon$) concentrate around the parameter degeneracy curve (shown as a dashed line) on which the data are indistinguishable. 
The second and third rows show the estimated posteriors when we, respectively, improve the initial ABC estimate by applying \nnkcde\ and \rfcde\ tuned with our CDE surrogate loss. 
A 
result that is apparent from the figure
is that \nnkcde\ and \rfcde\ fitted with a CDE loss are able to capture the degeneracy curve at a larger acceptance rate, that is, for a smaller number of simulations, than when using ABC only. 

\begin{table}[htbp]
  \label{tab:lfi-experiment-cdeloss}
  \caption{\small Performance of ABC, \nnkcde, \rfcde\ in LFI settings with simulated weak lensing data in terms of the surrogate CDE loss.}
  \centering
  \begin{tabular}{lrrr}
    \hline
    & \multicolumn{3}{c}{CDE Loss $\pm$ SE ($\times 10^{-5}$)} \\ \hline
    Method \textbackslash Acceptance Rate & 20\% & 50\% & 100\% \\ \hline
     ABC & -0.686 $\pm$ 0.009 & -0.392 $\pm$ 0.004 & -0.227 $\pm$ 0.001 \\
     \nnkcde\ & -1.652 $\pm$ 0.022 & -1.199 $\pm$ 0.016 & -0.844 $\pm$ 0.010\\
     \rfcde\ & -4.129 $\pm$ 0.063 & -3.698 $\pm$ 0.064 & -2.817 $\pm$ 0.055 \\
  \end{tabular}
\end{table}

We also have a direct measure of performance via the CDE loss and can adapt to different types of data by leveraging different CDE codes.
Table \ref{tab:lfi-experiment-cdeloss} shows how the methods compare in terms of the surrogate loss.
While decreasing the acceptance rate benefits all methods, it is clear that CDE-based approaches have better performance in all cases.

\subsection{Spec-$z$ estimation: Functional input}
\label{sec:functional}
In this example, we compare CDE methods in the context of spectroscopic redshift prediction using 2812 high-resolution spectra from the Sloan Digital Sky Survey (\textsc{SDSS}) Data Release 6 (preprocessed with the cuts of \citealt{Richards:EtAl:2009}), corresponding to features $\x$ of flux measurements at $d=3501$ wavelengths.  
The high-resolution spectra $\x$ can be seen as functional inputs on a continuum of wavelength values. 
Spectroscopic redshifts (or redshifts $z$ predicted from spectra $\x$) tend to be both accurate and precise, so the density $p(z \gvn \x)$ is well-approximated by a delta function at the true redshift. 
For the purposes of illustrating the use of the CDE codes, we define a noisified redshift $z_i= z_i^{\rm SDSS}+\epsilon_i$, where $\epsilon_i$ are independent and identically distributed variables drawn from a normal distribution $N(0,0 .02)$ and $z_i^{\rm SDSS}$ is the true redshift of galaxy $i$ provided by \textsc{SDSS}. 
Thus the conditional density $p(z_i \gvn \x_i)$ of this example is a Gaussian distribution with mean $z_i^{\rm SDSS}$ and variance 0.02.

We compare \frfcde, a ``Functional'' adaptation of \rfcde, with a standard ``Vector'' implementation of \rfcde, which treats functional data as a vector.
For completeness, we also compare against standard regression random forest combined with KDE, as well as \flexcode\code{-Spec}, an extension of \flexcode\ with a Spectral Series regression~\citep{Richards:EtAl:2009,Freeman, lee2016spectral} for determining the expansion coefficients. 
We train on 2000 galaxies and test on the remaining galaxies.
The \rfcde\ and the \frfcde\ trees are both trained with $n_{trees} = 1000$, $n_{basis}  = 31$, and bandwidths chosen by plug-in estimators.
We use \(\lambda=50\) for the \frfcde\ rate parameter of the Poisson process that defines the variable groupings. 

\begin{table}[htbp]
  \label{tab:frfcde-experiment}
  \caption{\small Performance of \rfcde\ for functional data. 
  We achieve both a lower CDE loss and computational time for Functional \rfcde (as compared to Vector \rfcde) by leveraging the functional nature of the data.}
  \centering
  \begin{tabular}{lrr}
    Method & Train Time (in sec) & CDE Loss $\pm$ SE\\
    \hline
     Functional \rfcde\ & 24.89 & -3.38 $\pm$ 0.155\\
     Vector \rfcde\ & 41.60 & -2.52 $\pm$ 0.104\\
     Regression RF + KDE & 50.63 & -3.42 $\pm$ 0.120 \\
     \flexcode\code{-Spec} & 4017.93 & -3.53 $\pm$ 0.136 \\
  \end{tabular}
\end{table}

Table \ref{tab:frfcde-experiment} contains the CDE loss and train time for both the vector-based and functional-based \rfcde\ models on the \textsc{SDSS} data, as well as for \flexcode\code{-Spec}.
We obtain substantial gains when incorporating functional features both in terms of CDE loss and computational time.
The computational gains are attributed to requiring fewer searches for each split point as the default value of $m_{try} = \sqrt{d}$ is reduced.
As anticipated, vector-based \rfcde\ underperforms with functional data due to the tree splits on CDE loss struggling to pick up signals in the data and returning almost the same conditional density regardless of the input. 
In contrast, splitting on mean squared error is an easier task and we include the results of regular regression random forest and KDE, which are comparable to the results of \frfcde.
Whereas random forests rely on variable selection (either individual variables as in vectorial \rfcde\ or grouped together as in \frfcde) for dimensional reduction, \flexcode\code{-Spec} is based on the Spectral Series mechanism for dimension reduction that finds (potentially) nonlinear, sparse structure in the data distribution~\citep{izbicki2016nonparametric}.  
Both types of dimension reduction are reasonable for spec-$z$ estimation, as particular wavelength locations or regions in the galaxy SED could carry information of the galaxy's true redshift; 
\rfcde\ and \frfcde\ are effective in finding such locations by variable selection. 
Spectral Series on the other hand are able to recover low-dimensional manifold structure in the entire data ensemble; 
as \citealt{Richards:EtAl:2009} show, the main direction of variation in SDSS spectra is directly related to the spectroscopic redshift.

\section{Conclusions} 
This paper presents statistical tools and software for uncertainty quantification in complex regression and parameter inference tasks.
Given a set of features $\x$ with associated response $\y$, our methods extend the usual point prediction task of classification and regression to nonparametric estimation of the entire (conditional) probability density $p(\y \gvn \x)$. 
The described CDE methods are meant to handle a range of different data settings as outlined in Table~\ref{tab:method-strenghts}. This paper includes examples 
of code usage in the contexts of \pz\ estimation, likelihood-free inference for cosmology analysis, and spec-z estimation. 
In addition, it provides tools for CDE method assessment and for choosing tuning parameters of the CDE methods in a principled way.

Our software includes four packages for CDE --- \nnkcde, \rfcde, \flexcode\ and \deepcde --- each using a different machine learning algorithm, as well as a package for  model assessment, \cdetools.  All packages are implemented in both \python\ and \rlang, with no dependence on proprietary software, making them compatible with any operating system supporting either language (e.g., Windows, MacOS, Ubuntu, and any other Unix-based OS).
The code is provided in publicly available Github repositories, documented using \python\ and \rlang\ function documentation standards, and equipped with \code{Travis CI}\footnote{\url{https://docs.travis-ci.com/}} automatic bug-tracking.
Finally, our software makes uncertainty quantification straightforward for those used to standard open-source machine learning \python\ packages: 
\nnkcde, \rfcde, {\flexcode} share the \code{sklearn} API (with \texttt{fit} and \texttt{predict} methods), which makes our methods compatible with \code{sklearn} wrapper functions (for e.g., cross-validation and model ensembling). 
\deepcde\ has implementations for both \code{Tensorflow} and \code{Pytorch}, two of the most widely used deep learning frameworks, and can easily be combined with most existing network architectures. 


\section*{Acknowledgements}

We would like to thank the two anonymous reviewers for their insightful comments that helped improve the manuscript.
ND, TP and ABL were partially supported by the National Science Foundation under Grant Nos.\ DMS1521786.
RI was supported by Conselho Nacional de Desenvolvimento Cient\'ifico e Tecnol\'ogico (grant number 306943/2017-4) and Funda\c{c}\~ao de Amparo \`a Pesquisa do Estado de S\~ao Paulo (grants number 2017/03363-8
and 2019/11321-9).
AIM acknowledges support from the Max Planck Society and the Alexander von Humboldt Foundation in the framework of the Max Planck-Humboldt Research Award endowed by the Federal Ministry of Education and Research.
During the completion of this work, AIM was advised by David W. Hogg and supported in part by National Science Foundation grant AST-1517237.

\newpage

\appendix

\section{Examples of code usage}\label{sec:code_examples}

In this section we provide examples of code usage in \python\ and \rlang. 
To save space, we have included the \deepcde\ examples directly in the code release\footnote{\url{https://github.com/tpospisi/DeepCDE/tree/master/model_examples}} \citep{DeepCDE_code} rather than in this section. 

\subsection{\nnkcde}
\label{app:nnkcde}

\noindent\rule{\textwidth}{0.5pt}
\begin{Verbatim}[commandchars=\\\{\},codes={\catcode`\$=3\catcode`\^=7\catcode`\_=8}]
\PYG{k+kn}{import} \PYG{n+nn}{numpy} \PYG{k+kn}{as} \PYG{n+nn}{np}
\PYG{k+kn}{import} \PYG{n+nn}{nnkcde}

\PYG{c+c1}{\PYGZsh{} Fit the model}
\PYG{n}{model} \PYG{o}{=} \PYG{n}{nnkcde}\PYG{o}{.}\PYG{n}{NNKCDE}\PYG{p}{()}
\PYG{n}{model}\PYG{o}{.}\PYG{n}{fit}\PYG{p}{(}\PYG{n}{x\PYGZus{}train}\PYG{p}{,} \PYG{n}{y\PYGZus{}train}\PYG{p}{)}

\PYG{c+c1}{\PYGZsh{} Tune parameters: bandwidth, number of neighbors k}
\PYG{c+c1}{\PYGZsh{} Pick best loss on validation data}
\PYG{n}{k\PYGZus{}choices} \PYG{o}{=} \PYG{p}{[}\PYG{l+m+mi}{5}\PYG{p}{,} \PYG{l+m+mi}{10}\PYG{p}{,} \PYG{l+m+mi}{100}\PYG{p}{,} \PYG{l+m+mi}{1000}\PYG{p}{]}
\PYG{n}{bandwith\PYGZus{}choices} \PYG{o}{=} \PYG{p}{[}\PYG{l+m+mf}{0.01}\PYG{p}{,} \PYG{l+m+mf}{0.05}\PYG{p}{,} \PYG{l+m+mf}{0.1}\PYG{p}{]}
\PYG{n}{model}\PYG{o}{.}\PYG{n}{tune}\PYG{p}{(}\PYG{n}{x\PYGZus{}validation}\PYG{p}{,} \PYG{n}{y\PYGZus{}validation}\PYG{p}{,} \PYG{n}{k\PYGZus{}choices}\PYG{p}{,} \PYG{n}{bandwidth\PYGZus{}choices}\PYG{p}{)}

\PYG{c+c1}{\PYGZsh{} Predict new densities on grid}
\PYG{n}{y\PYGZus{}grid} \PYG{o}{=} \PYG{n}{np}\PYG{o}{.}\PYG{n}{linspace}\PYG{p}{(}\PYG{n}{y\PYGZus{}min}\PYG{p}{,} \PYG{n}{y\PYGZus{}max}\PYG{p}{,} \PYG{n}{n\PYGZus{}grid}\PYG{p}{)}
\PYG{n}{cde\PYGZus{}test} \PYG{o}{=} \PYG{n}{model}\PYG{o}{.}\PYG{n}{predict}\PYG{p}{(}\PYG{n}{x\PYGZus{}test}\PYG{p}{,} \PYG{n}{y\PYGZus{}grid}\PYG{p}{)}
\end{Verbatim}
\noindent\rule{\textwidth}{0.5pt}

\vspace{0.1cm}

\noindent\rule{\textwidth}{0.5pt}
\begin{Verbatim}[commandchars=\\\{\},codes={\catcode`\$=3\catcode`\^=7\catcode`\_=8}]
\PYG{n+nf}{library}\PYG{p}{(}\PYG{n}{NNKCDE}\PYG{p}{)}

\PYG{c+c1}{\PYGZsh{} Fit the model}
\PYG{n}{model} \PYG{o}{\PYGZlt{}\PYGZhy{}} \PYG{n}{NNKCDE}\PYG{o}{::}\PYG{n}{NNKCDE}\PYG{o}{\PYGZdl{}}\PYG{n+nf}{new}\PYG{p}{(}\PYG{n}{x\PYGZus{}train}\PYG{p}{,} \PYG{n}{y\PYGZus{}train}\PYG{p}{)}

\PYG{c+c1}{\PYGZsh{} Tune parameters: bandwidth, number of neighbors k}
\PYG{c+c1}{\PYGZsh{} Pick best loss on validation data}
\PYG{n}{k\PYGZus{}choices} \PYG{o}{\PYGZlt{}\PYGZhy{}} \PYG{n+nf}{c}\PYG{p}{(}\PYG{l+m}{5}\PYG{p}{,} \PYG{l+m}{10}\PYG{p}{,} \PYG{l+m}{100}\PYG{p}{,} \PYG{l+m}{1000}\PYG{p}{)}
\PYG{n}{bandwith\PYGZus{}choices} \PYG{o}{\PYGZlt{}\PYGZhy{}} \PYG{n+nf}{c}\PYG{p}{(}\PYG{l+m}{0.01}\PYG{p}{,} \PYG{l+m}{0.05}\PYG{p}{,} \PYG{l+m}{0.1}\PYG{p}{)}
\PYG{n}{model}\PYG{o}{\PYGZdl{}}\PYG{n+nf}{tune}\PYG{p}{(}\PYG{n}{x\PYGZus{}validation}\PYG{p}{,} \PYG{n}{y\PYGZus{}validation}\PYG{p}{,} \PYG{n}{k\PYGZus{}grid} \PYG{o}{=} \PYG{n}{k\PYGZus{}choices}\PYG{p}{,} \PYG{n}{h\PYGZus{}grid} \PYG{o}{=} \PYG{n}{bandwidth\PYGZus{}choices}\PYG{p}{)}

\PYG{c+c1}{\PYGZsh{} Predict new densities on grid}
\PYG{n}{y\PYGZus{}grid} \PYG{o}{\PYGZlt{}\PYGZhy{}} \PYG{n+nf}{seq}\PYG{p}{(}\PYG{n}{y\PYGZus{}min}\PYG{p}{,} \PYG{n}{y\PYGZus{}max}\PYG{p}{,} \PYG{n}{length.out} \PYG{o}{=} \PYG{n}{n\PYGZus{}grid}\PYG{p}{)}
\PYG{n}{cde\PYGZus{}test} \PYG{o}{\PYGZlt{}\PYGZhy{}} \PYG{n}{model}\PYG{o}{\PYGZdl{}}\PYG{n+nf}{predict}\PYG{p}{(}\PYG{n}{x\PYGZus{}test}\PYG{p}{,} \PYG{n}{y\PYGZus{}grid}\PYG{p}{)}
\end{Verbatim}
\noindent\rule{\textwidth}{0.5pt}

\bigskip

\subsection{\rfcde}
\label{app:rfcde}

\noindent\rule{\textwidth}{0.5pt}
\begin{Verbatim}[commandchars=\\\{\},codes={\catcode`\$=3\catcode`\^=7\catcode`\_=8}]
\PYG{k+kn}{import} \PYG{n+nn}{numpy} \PYG{k+kn}{as} \PYG{n+nn}{np}
\PYG{k+kn}{import} \PYG{n+nn}{rfcde}

\PYG{c+c1}{\PYGZsh{} Parameters}
\PYG{n}{n\PYGZus{}trees} \PYG{o}{=} \PYG{l+m+mi}{1000}   \PYG{c+c1}{\PYGZsh{} Number of trees in the forest}
\PYG{n}{mtry} \PYG{o}{=} \PYG{l+m+mi}{4}         \PYG{c+c1}{\PYGZsh{} Number of variables to potentially split at in each node}
\PYG{n}{node\PYGZus{}size} \PYG{o}{=} \PYG{l+m+mi}{20}   \PYG{c+c1}{\PYGZsh{} Smallest node size}
\PYG{n}{n\PYGZus{}basis} \PYG{o}{=} \PYG{l+m+mi}{15}     \PYG{c+c1}{\PYGZsh{} Number of basis functions}
\PYG{n}{bandwidth} \PYG{o}{=} \PYG{l+m+mf}{0.2}  \PYG{c+c1}{\PYGZsh{} Kernel bandwith \PYGZhy{} used for prediction only}

\PYG{c+c1}{\PYGZsh{} Fit the model}
\PYG{n}{forest} \PYG{o}{=} \PYG{n}{rfcde}\PYG{o}{.}\PYG{n}{RFCDE}\PYG{p}{(}\PYG{n}{n\PYGZus{}trees}\PYG{o}{=}\PYG{n}{n\PYGZus{}trees}\PYG{p}{,} \PYG{n}{mtry}\PYG{o}{=}\PYG{n}{mtry}\PYG{p}{,} \PYG{n}{node\PYGZus{}size}\PYG{o}{=}\PYG{n}{node\PYGZus{}size}\PYG{p}{,} \PYG{n}{n\PYGZus{}basis}\PYG{o}{=}\PYG{n}{n\PYGZus{}basis}\PYG{p}{)}
\PYG{n}{forest}\PYG{o}{.}\PYG{n}{train}\PYG{p}{(}\PYG{n}{x\PYGZus{}train}\PYG{p}{,} \PYG{n}{y\PYGZus{}train}\PYG{p}{)}

\PYG{c+c1}{\PYGZsh{} Predict new densities on grid}
\PYG{n}{y\PYGZus{}grid} \PYG{o}{=} \PYG{n}{np}\PYG{o}{.}\PYG{n}{linspace}\PYG{p}{(}\PYG{l+m+mi}{0}\PYG{p}{,} \PYG{l+m+mi}{1}\PYG{p}{,} \PYG{n}{n\PYGZus{}grid}\PYG{p}{)}
\PYG{n}{cde\PYGZus{}test} \PYG{o}{=} \PYG{n}{forest}\PYG{o}{.}\PYG{n}{predict}\PYG{p}{(}\PYG{n}{x\PYGZus{}test}\PYG{p}{,} \PYG{n}{y\PYGZus{}grid}\PYG{p}{,} \PYG{n}{bandwidth}\PYG{p}{)}

\PYG{c+c1}{\PYGZsh{} Predict conditional means for CDE\PYGZhy{}optimized forest}
\PYG{n}{cond\PYGZus{}mean\PYGZus{}test} \PYG{o}{=} \PYG{n}{forest}\PYG{o}{.}\PYG{n}{predict\PYGZus{}mean}\PYG{p}{(}\PYG{n}{x\PYGZus{}test}\PYG{p}{)}

\PYG{c+c1}{\PYGZsh{} Predict quantiles (i.e., quantile regression) for CDE\PYGZhy{}optimized forest}
\PYG{n}{alpha\PYGZus{}quantile} \PYG{o}{=} \PYG{o}{.}\PYG{l+m+mi}{90}
\PYG{n}{quant\PYGZus{}test} \PYG{o}{=} \PYG{n}{forest}\PYG{o}{.}\PYG{n}{predict\PYGZus{}quantile}\PYG{p}{(}\PYG{n}{x\PYGZus{}test}\PYG{p}{,} \PYG{n}{alpha\PYGZus{}quantile}\PYG{p}{)}
\end{Verbatim}
\noindent\rule{\textwidth}{0.5pt}

\vspace{0.1cm}

\noindent\rule{\textwidth}{0.5pt}
\begin{Verbatim}[commandchars=\\\{\},codes={\catcode`\$=3\catcode`\^=7\catcode`\_=8}]
\PYG{n+nf}{library}\PYG{p}{(}\PYG{n}{RFCDE}\PYG{p}{)}

\PYG{c+c1}{\PYGZsh{} Parameters}
\PYG{n}{n\PYGZus{}trees} \PYG{o}{\PYGZlt{}\PYGZhy{}} \PYG{l+m}{1000}   \PYG{c+c1}{\PYGZsh{} Number of trees in the forest}
\PYG{n}{mtry} \PYG{o}{\PYGZlt{}\PYGZhy{}} \PYG{l+m}{4}         \PYG{c+c1}{\PYGZsh{} Number of variables to potentially split at in each node}
\PYG{n}{node\PYGZus{}size} \PYG{o}{\PYGZlt{}\PYGZhy{}} \PYG{l+m}{20}   \PYG{c+c1}{\PYGZsh{} Smallest node size}
\PYG{n}{n\PYGZus{}basis} \PYG{o}{\PYGZlt{}\PYGZhy{}} \PYG{l+m}{15}     \PYG{c+c1}{\PYGZsh{} Number of basis functions}
\PYG{n}{bandwidth} \PYG{o}{\PYGZlt{}\PYGZhy{}} \PYG{l+m}{0.2}  \PYG{c+c1}{\PYGZsh{} Kernel bandwith \PYGZhy{} used for prediction only}

\PYG{c+c1}{\PYGZsh{} Fit the model}
\PYG{n}{forest} \PYG{o}{\PYGZlt{}\PYGZhy{}} \PYG{n}{RFCDE}\PYG{o}{::}\PYG{n+nf}{RFCDE}\PYG{p}{(}\PYG{n}{x\PYGZus{}train}\PYG{p}{,} \PYG{n}{y\PYGZus{}train}\PYG{p}{,} \PYG{n}{n\PYGZus{}trees} \PYG{o}{=} \PYG{n}{n\PYGZus{}trees}\PYG{p}{,} \PYG{n}{mtry} \PYG{o}{=} \PYG{n}{mtry}\PYG{p}{,}
                      \PYG{n}{node\PYGZus{}size} \PYG{o}{=} \PYG{n}{node\PYGZus{}size}\PYG{p}{,} \PYG{n}{n\PYGZus{}basis} \PYG{o}{=} \PYG{n}{n\PYGZus{}basis}\PYG{p}{)}

\PYG{c+c1}{\PYGZsh{} Predict new densities on grid}
\PYG{n}{y\PYGZus{}grid} \PYG{o}{\PYGZlt{}\PYGZhy{}} \PYG{n+nf}{seq}\PYG{p}{(}\PYG{l+m}{0}\PYG{p}{,} \PYG{l+m}{1}\PYG{p}{,} \PYG{n}{length.out} \PYG{o}{=} \PYG{n}{n\PYGZus{}grid}\PYG{p}{)}
\PYG{n}{cde\PYGZus{}test} \PYG{o}{\PYGZlt{}\PYGZhy{}} \PYG{n+nf}{predict}\PYG{p}{(}\PYG{n}{forest}\PYG{p}{,} \PYG{n}{x\PYGZus{}test}\PYG{p}{,} \PYG{n}{y\PYGZus{}grid}\PYG{p}{,} \PYG{n}{response} \PYG{o}{=} \PYG{l+s}{\PYGZsq{}CDE\PYGZsq{}}\PYG{p}{,} \PYG{n}{bandwidth} \PYG{o}{=} \PYG{n}{bandwidth}\PYG{p}{)}

\PYG{c+c1}{\PYGZsh{} Predict conditional means for CDE\PYGZhy{}optimized forest}
\PYG{n}{cond\PYGZus{}mean\PYGZus{}test} \PYG{o}{\PYGZlt{}\PYGZhy{}} \PYG{n+nf}{predict}\PYG{p}{(}\PYG{n}{forest}\PYG{p}{,} \PYG{n}{x\PYGZus{}test}\PYG{p}{,} \PYG{n}{y\PYGZus{}grid}\PYG{p}{,} \PYG{n}{response} \PYG{o}{=} \PYG{l+s}{\PYGZsq{}mean\PYGZsq{}}\PYG{p}{,} \PYG{n}{bandwidth} \PYG{o}{=} \PYG{n}{bandwidth}\PYG{p}{)}

\PYG{c+c1}{\PYGZsh{} Predict quantiles (i.e., quantile regression) for CDE\PYGZhy{}optimized forest}
\PYG{n}{alpha\PYGZus{}quantile} \PYG{o}{\PYGZlt{}\PYGZhy{}} \PYG{l+m}{.90}
\PYG{n}{quant\PYGZus{}test} \PYG{o}{\PYGZlt{}\PYGZhy{}} \PYG{n+nf}{predict}\PYG{p}{(}\PYG{n}{forest}\PYG{p}{,} \PYG{n}{x\PYGZus{}test}\PYG{p}{,} \PYG{n}{y\PYGZus{}grid}\PYG{p}{,} \PYG{n}{response} \PYG{o}{=} \PYG{l+s}{\PYGZsq{}quantile\PYGZsq{}}\PYG{p}{,}
                     \PYG{n}{quantile} \PYG{o}{=} \PYG{n}{alpha\PYGZus{}quantile}\PYG{p}{,} \PYG{n}{bandwidth} \PYG{o}{=} \PYG{n}{bandwidth}\PYG{p}{)}
\end{Verbatim}
\noindent\rule{\textwidth}{0.5pt}

\bigskip

\subsubsection{\frfcde}
\label{app:frfcde}

\noindent\rule{\textwidth}{0.5pt}
\begin{Verbatim}[commandchars=\\\{\},codes={\catcode`\$=3\catcode`\^=7\catcode`\_=8}]
\PYG{k+kn}{import} \PYG{n+nn}{numpy} \PYG{k+kn}{as} \PYG{n+nn}{np}
\PYG{k+kn}{import} \PYG{n+nn}{rfcde}

\PYG{c+c1}{\PYGZsh{} Parameters}
\PYG{n}{n\PYGZus{}trees} \PYG{o}{=} \PYG{l+m+mi}{1000}     \PYG{c+c1}{\PYGZsh{} Number of trees in the forest}
\PYG{n}{mtry} \PYG{o}{=} \PYG{l+m+mi}{4}           \PYG{c+c1}{\PYGZsh{} Number of variables to potentially split at in each node}
\PYG{n}{node\PYGZus{}size} \PYG{o}{=} \PYG{l+m+mi}{20}     \PYG{c+c1}{\PYGZsh{} Smallest node size}
\PYG{n}{n\PYGZus{}basis} \PYG{o}{=} \PYG{l+m+mi}{15}       \PYG{c+c1}{\PYGZsh{} Number of basis functions}
\PYG{n}{bandwidth} \PYG{o}{=} \PYG{l+m+mf}{0.2}    \PYG{c+c1}{\PYGZsh{} Kernel bandwith \PYGZhy{} used for prediction only}
\PYG{n}{lambda\PYGZus{}param} \PYG{o}{=} \PYG{l+m+mi}{10}  \PYG{c+c1}{\PYGZsh{} Poisson Process parameter}

\PYG{c+c1}{\PYGZsh{} Fit the model}
\PYG{n}{functional\PYGZus{}forest} \PYG{o}{=} \PYG{n}{rfcde}\PYG{o}{.}\PYG{n}{RFCDE}\PYG{p}{(}\PYG{n}{n\PYGZus{}trees}\PYG{o}{=}\PYG{n}{n\PYGZus{}trees}\PYG{p}{,} \PYG{n}{mtry}\PYG{o}{=}\PYG{n}{mtry}\PYG{p}{,} \PYG{n}{node\PYGZus{}size}\PYG{o}{=}\PYG{n}{node\PYGZus{}size}\PYG{p}{,}
                                \PYG{n}{n\PYGZus{}basis}\PYG{o}{=}\PYG{n}{n\PYGZus{}basis}\PYG{p}{)}
\PYG{n}{functional\PYGZus{}forest}\PYG{o}{.}\PYG{n}{train}\PYG{p}{(}\PYG{n}{x\PYGZus{}train}\PYG{p}{,} \PYG{n}{y\PYGZus{}train}\PYG{p}{,} \PYG{n}{flamba}\PYG{o}{=}\PYG{n}{lambda\PYGZus{}param}\PYG{p}{)}

\PYG{c+c1}{\PYGZsh{} ... Same as RFCDE for prediction ...}
\end{Verbatim}
\noindent\rule{\textwidth}{0.5pt}

\vspace{0.1cm}

\noindent\rule{\textwidth}{0.5pt}
\begin{Verbatim}[commandchars=\\\{\},codes={\catcode`\$=3\catcode`\^=7\catcode`\_=8}]
\PYG{n+nf}{library}\PYG{p}{(}\PYG{n}{RFCDE}\PYG{p}{)}

\PYG{c+c1}{\PYGZsh{} Parameters}
\PYG{n}{n\PYGZus{}trees} \PYG{o}{\PYGZlt{}\PYGZhy{}} \PYG{l+m}{1000}     \PYG{c+c1}{\PYGZsh{} Number of trees in the forest}
\PYG{n}{mtry} \PYG{o}{\PYGZlt{}\PYGZhy{}} \PYG{l+m}{4}           \PYG{c+c1}{\PYGZsh{} Number of variables to potentially split at in each node}
\PYG{n}{node\PYGZus{}size} \PYG{o}{\PYGZlt{}\PYGZhy{}} \PYG{l+m}{20}     \PYG{c+c1}{\PYGZsh{} Smallest node size}
\PYG{n}{n\PYGZus{}basis} \PYG{o}{\PYGZlt{}\PYGZhy{}} \PYG{l+m}{15}       \PYG{c+c1}{\PYGZsh{} Number of basis functions}
\PYG{n}{bandwidth} \PYG{o}{\PYGZlt{}\PYGZhy{}} \PYG{l+m}{0.2}    \PYG{c+c1}{\PYGZsh{} Kernel bandwith \PYGZhy{} used for prediction only}
\PYG{n}{lambda\PYGZus{}param} \PYG{o}{\PYGZlt{}\PYGZhy{}} \PYG{l+m}{10}  \PYG{c+c1}{\PYGZsh{} Poisson Process parameter}

\PYG{c+c1}{\PYGZsh{} Fit the model}
\PYG{n}{functional\PYGZus{}forest} \PYG{o}{\PYGZlt{}\PYGZhy{}} \PYG{n}{RFCDE}\PYG{o}{::}\PYG{n+nf}{RFCDE}\PYG{p}{(}\PYG{n}{x\PYGZus{}train}\PYG{p}{,} \PYG{n}{y\PYGZus{}train}\PYG{p}{,} \PYG{n}{n\PYGZus{}trees} \PYG{o}{=} \PYG{n}{n\PYGZus{}trees}\PYG{p}{,} \PYG{n}{mtry} \PYG{o}{=} \PYG{n}{mtry}\PYG{p}{,}
                                  \PYG{n}{node\PYGZus{}size} \PYG{o}{=} \PYG{n}{node\PYGZus{}size}\PYG{p}{,} \PYG{n}{n\PYGZus{}basis} \PYG{o}{=} \PYG{n}{n\PYGZus{}basis}\PYG{p}{,}
                                  \PYG{n}{flambda} \PYG{o}{=} \PYG{n}{lambda\PYGZus{}param}\PYG{p}{)}

\PYG{c+c1}{\PYGZsh{} ... Same as RFCDE for prediction ...}
\end{Verbatim}
\noindent\rule{\textwidth}{0.5pt}

\bigskip

\subsection{\flexcode}
\label{app:flexcode}

\noindent\rule{\textwidth}{0.5pt}
\begin{Verbatim}[commandchars=\\\{\},codes={\catcode`\$=3\catcode`\^=7\catcode`\_=8}]
\PYG{k+kn}{import} \PYG{n+nn}{numpy} \PYG{k+kn}{as} \PYG{n+nn}{np}
\PYG{k+kn}{import} \PYG{n+nn}{flexcode}

\PYG{c+c1}{\PYGZsh{} Select regression method}
\PYG{k+kn}{from} \PYG{n+nn}{flexcode.regression\PYGZus{}models} \PYG{k+kn}{import} \PYG{n}{NN}

\PYG{c+c1}{\PYGZsh{} Parameters}
\PYG{n}{basis\PYGZus{}system} \PYG{o}{=} \PYG{l+s+s2}{\PYGZdq{}cosine\PYGZdq{}}  \PYG{c+c1}{\PYGZsh{} Basis system}
\PYG{n}{max\PYGZus{}basis} \PYG{o}{=} \PYG{l+m+mi}{31}           \PYG{c+c1}{\PYGZsh{} Maximum number of basis. If the model is not tuned,}
                         \PYG{c+c1}{\PYGZsh{} max\PYGZus{}basis is set as number of basis}

\PYG{c+c1}{\PYGZsh{} Regression Parameters}
\PYG{c+c1}{\PYGZsh{} If a list is passed for any parameter automatic 5\PYGZhy{}fold CV is used to}
\PYG{c+c1}{\PYGZsh{} determine the best parameter combination.}
\PYG{n}{params} \PYG{o}{=} \PYG{p}{\PYGZob{}}\PYG{l+s+s2}{\PYGZdq{}k\PYGZdq{}}\PYG{p}{:} \PYG{p}{[}\PYG{l+m+mi}{5}\PYG{p}{,} \PYG{l+m+mi}{10}\PYG{p}{,} \PYG{l+m+mi}{15}\PYG{p}{,} \PYG{l+m+mi}{20}\PYG{p}{]\PYGZcb{}}       \PYG{c+c1}{\PYGZsh{} A dictionary with method\PYGZhy{}specific regression parameters.}

\PYG{c+c1}{\PYGZsh{} Parameterize model}
\PYG{n}{model} \PYG{o}{=} \PYG{n}{flexcode}\PYG{o}{.}\PYG{n}{FlexCodeModel}\PYG{p}{(}\PYG{n}{NN}\PYG{p}{,} \PYG{n}{max\PYGZus{}basis}\PYG{p}{,} \PYG{n}{basis\PYGZus{}system}\PYG{p}{,} \PYG{n}{regression\PYGZus{}params}\PYG{o}{=}\PYG{n}{params}\PYG{p}{)}

\PYG{c+c1}{\PYGZsh{} Fit model \PYGZhy{} this will also choose the optimal number of neighbors `k`}
\PYG{n}{model}\PYG{o}{.}\PYG{n}{fit}\PYG{p}{(}\PYG{n}{x\PYGZus{}train}\PYG{p}{,} \PYG{n}{y\PYGZus{}train}\PYG{p}{)}

\PYG{c+c1}{\PYGZsh{} Tune model \PYGZhy{} Select the best number of basis}
\PYG{n}{model}\PYG{o}{.}\PYG{n}{tune}\PYG{p}{(}\PYG{n}{x\PYGZus{}validation}\PYG{p}{,} \PYG{n}{y\PYGZus{}validation}\PYG{p}{)}

\PYG{c+c1}{\PYGZsh{} Predict new densities on grid}
\PYG{n}{cde\PYGZus{}test}\PYG{p}{,} \PYG{n}{y\PYGZus{}grid} \PYG{o}{=} \PYG{n}{model}\PYG{o}{.}\PYG{n}{predict}\PYG{p}{(}\PYG{n}{x\PYGZus{}test}\PYG{p}{,} \PYG{n}{n\PYGZus{}grid}\PYG{o}{=}\PYG{n}{n\PYGZus{}grid}\PYG{p}{)}
\end{Verbatim}
\noindent\rule{\textwidth}{0.5pt}

\vspace{0.1cm}

\noindent\rule{\textwidth}{0.5pt}
\begin{Verbatim}[commandchars=\\\{\},codes={\catcode`\$=3\catcode`\^=7\catcode`\_=8}]
\PYG{n+nf}{library}\PYG{p}{(}\PYG{n}{FlexCoDE}\PYG{p}{)}

\PYG{c+c1}{\PYGZsh{} Parameters}
\PYG{n}{basis\PYGZus{}system} \PYG{o}{\PYGZlt{}\PYGZhy{}} \PYG{l+s}{\PYGZdq{}Cosine\PYGZdq{}}  \PYG{c+c1}{\PYGZsh{} Basis system}
\PYG{n}{max\PYGZus{}basis} \PYG{o}{\PYGZlt{}\PYGZhy{}} \PYG{l+m}{31}           \PYG{c+c1}{\PYGZsh{} Maximum number of basis.}

\PYG{c+c1}{\PYGZsh{} Fit and tune FlexCode via KNN regression using 4 cores}
\PYG{n}{fit} \PYG{o}{\PYGZlt{}\PYGZhy{}} \PYG{n}{FlexCoDE}\PYG{o}{::}\PYG{n+nf}{fitFlexCoDE}\PYG{p}{(}\PYG{n}{x\PYGZus{}train}\PYG{p}{,} \PYG{n}{y\PYGZus{}train}\PYG{p}{,} \PYG{n}{x\PYGZus{}validation}\PYG{p}{,} \PYG{n}{y\PYGZus{}validation}\PYG{p}{,}
           \PYG{n}{nIMax} \PYG{o}{=} \PYG{n}{max\PYGZus{}basis}\PYG{p}{,} \PYG{n}{regressionFunction.extra} \PYG{o}{=} \PYG{n+nf}{list}\PYG{p}{(}\PYG{n}{nCores} \PYG{o}{=} \PYG{l+m}{4}\PYG{p}{),}
           \PYG{n}{regressionFunction} \PYG{o}{=} \PYG{n}{FlexCoDE}\PYG{o}{::}\PYG{n}{regressionFunction.NN}\PYG{p}{,}
           \PYG{n}{system} \PYG{o}{=} \PYG{n}{basis\PYGZus{}system}\PYG{p}{)}

\PYG{c+c1}{\PYGZsh{} Predict new densities on a grid}
\PYG{n}{n\PYGZus{}points\PYGZus{}grid} \PYG{o}{\PYGZlt{}\PYGZhy{}} \PYG{l+m}{500}     \PYG{c+c1}{\PYGZsh{} Number of points in the grid}
\PYG{n}{cde\PYGZus{}test} \PYG{o}{\PYGZlt{}\PYGZhy{}} \PYG{n+nf}{predict}\PYG{p}{(}\PYG{n}{fit}\PYG{p}{,} \PYG{n}{x\PYGZus{}test}\PYG{p}{,} \PYG{n}{B} \PYG{o}{=} \PYG{n}{n\PYGZus{}points\PYGZus{}grid}\PYG{p}{)}

\PYG{c+c1}{\PYGZsh{}\PYGZsh{} Shortcut for FlexZBoost}

\PYG{c+c1}{\PYGZsh{} Fit and tune FlexZBoost}
\PYG{n}{fit} \PYG{o}{\PYGZlt{}\PYGZhy{}} \PYG{n}{FlexCoDE}\PYG{o}{::}\PYG{n+nf}{FlexZBoost}\PYG{p}{(}\PYG{n}{x\PYGZus{}train}\PYG{p}{,} \PYG{n}{y\PYGZus{}train}\PYG{p}{,} \PYG{n}{x\PYGZus{}validation}\PYG{p}{,} \PYG{n}{y\PYGZus{}validation}\PYG{p}{,}
                            \PYG{n}{nIMax} \PYG{o}{=} \PYG{n}{max\PYGZus{}basis}\PYG{p}{,} \PYG{n}{system} \PYG{o}{=} \PYG{n}{basis\PYGZus{}system}\PYG{p}{)}

\PYG{c+c1}{\PYGZsh{} Predict new densities on a grid}
\PYG{n}{n\PYGZus{}points\PYGZus{}grid} \PYG{o}{\PYGZlt{}\PYGZhy{}} \PYG{l+m}{500}     \PYG{c+c1}{\PYGZsh{} Number of points in the grid}
\PYG{n}{cde\PYGZus{}test} \PYG{o}{\PYGZlt{}\PYGZhy{}} \PYG{n+nf}{predict}\PYG{p}{(}\PYG{n}{fit}\PYG{p}{,} \PYG{n}{x\PYGZus{}test}\PYG{p}{,} \PYG{n}{B} \PYG{o}{=} \PYG{n}{n\PYGZus{}points\PYGZus{}grid}\PYG{p}{)}
\end{Verbatim}
\noindent\rule{\textwidth}{0.5pt}

\newpage

\subsection{\cdetools}
\label{app:cdetools}

\noindent\rule{\textwidth}{0.5pt}
\begin{Verbatim}[commandchars=\\\{\},codes={\catcode`\$=3\catcode`\^=7\catcode`\_=8}]
\PYG{k+kn}{import} \PYG{n+nn}{cdetools}

\PYG{n}{cde\PYGZus{}test}   \PYG{c+c1}{\PYGZsh{} numpy matrix of conditional density evaluations on a grid}
           \PYG{c+c1}{\PYGZsh{} \PYGZhy{} each row is an observation, each column is a grid point}
\PYG{n}{y\PYGZus{}grid}     \PYG{c+c1}{\PYGZsh{} the grid at which cde\PYGZus{}test is evaluated}
\PYG{n}{y\PYGZus{}true}     \PYG{c+c1}{\PYGZsh{} the observed y values}

\PYG{c+c1}{\PYGZsh{} Calculate the cde\PYGZus{}loss}
\PYG{n}{cde\PYGZus{}loss}\PYG{p}{(}\PYG{n}{cde\PYGZus{}test}\PYG{p}{,} \PYG{n}{y\PYGZus{}grid}\PYG{p}{,} \PYG{n}{y\PYGZus{}test}\PYG{p}{)}

\PYG{c+c1}{\PYGZsh{} Calculate PIT values}
\PYG{n}{cdf\PYGZus{}coverage}\PYG{p}{(}\PYG{n}{cde\PYGZus{}test}\PYG{p}{,} \PYG{n}{y\PYGZus{}grid}\PYG{p}{,} \PYG{n}{y\PYGZus{}test}\PYG{p}{)} 

\PYG{c+c1}{\PYGZsh{} Calculate HPD values}
\PYG{n}{hpd\PYGZus{}coverage}\PYG{p}{(}\PYG{n}{cde\PYGZus{}test}\PYG{p}{,} \PYG{n}{y\PYGZus{}grid}\PYG{p}{,} \PYG{n}{y\PYGZus{}test}\PYG{p}{)}
\end{Verbatim}
\noindent\rule{\textwidth}{0.5pt}

\vspace{0.1cm}

\noindent\rule{\textwidth}{0.5pt}
\begin{Verbatim}[commandchars=\\\{\},codes={\catcode`\$=3\catcode`\^=7\catcode`\_=8}]
\PYG{n+nf}{library}\PYG{p}{(}\PYG{n}{cde\PYGZus{}tools}\PYG{p}{)}

\PYG{n}{cde\PYGZus{}test}   \PYG{c+c1}{\PYGZsh{} matrix of conditional density evaluations on a grid}
           \PYG{c+c1}{\PYGZsh{} \PYGZhy{} each row is an observation, each column is a grid point}
\PYG{n}{y\PYGZus{}grid}     \PYG{c+c1}{\PYGZsh{} the grid at which cde\PYGZus{}test is evaluated}
\PYG{n}{y\PYGZus{}true}     \PYG{c+c1}{\PYGZsh{} the observed y values}

\PYG{c+c1}{\PYGZsh{} Calculate the cde\PYGZus{}loss}
\PYG{n+nf}{cde\PYGZus{}loss}\PYG{p}{(}\PYG{n}{cde\PYGZus{}test}\PYG{p}{,} \PYG{n}{y\PYGZus{}grid}\PYG{p}{,} \PYG{n}{y\PYGZus{}test}\PYG{p}{)}

\PYG{c+c1}{\PYGZsh{} Calculate PIT values}
\PYG{n+nf}{cdf\PYGZus{}coverage}\PYG{p}{(}\PYG{n}{cde\PYGZus{}test}\PYG{p}{,} \PYG{n}{y\PYGZus{}grid}\PYG{p}{,} \PYG{n}{y\PYGZus{}test}\PYG{p}{)}

\PYG{c+c1}{\PYGZsh{} Calculate HPD values}
\PYG{n+nf}{hpd\PYGZus{}coverage}\PYG{p}{(}\PYG{n}{cde\PYGZus{}test}\PYG{p}{,} \PYG{n}{y\PYGZus{}grid}\PYG{p}{,} \PYG{n}{y\PYGZus{}test}\PYG{p}{)}
\end{Verbatim}
\noindent\rule{\textwidth}{0.5pt}

\section{Point Estimate Photo-Z Redshifts (Univariate Response with Multivariate Data)}
\label{sec:app_pointestimates}

Plots of point estimates can sometimes be a useful qualitative diagnostic of \pz\ code performance in that they may quickly identify a bad model. 
For example, the marginal model would always lead to the same point estimate and hence a horizontal line in the plot. 
Figure~\ref{fig:setting1_examples_prediction} shows point \pz\ predictions versus the true redshift for the \pz\ codes in Example 4.1.

\begin{figure}[htbp]
\centering
\label{fig:teddy-examples-predictions}
\includegraphics[width=0.9\linewidth]{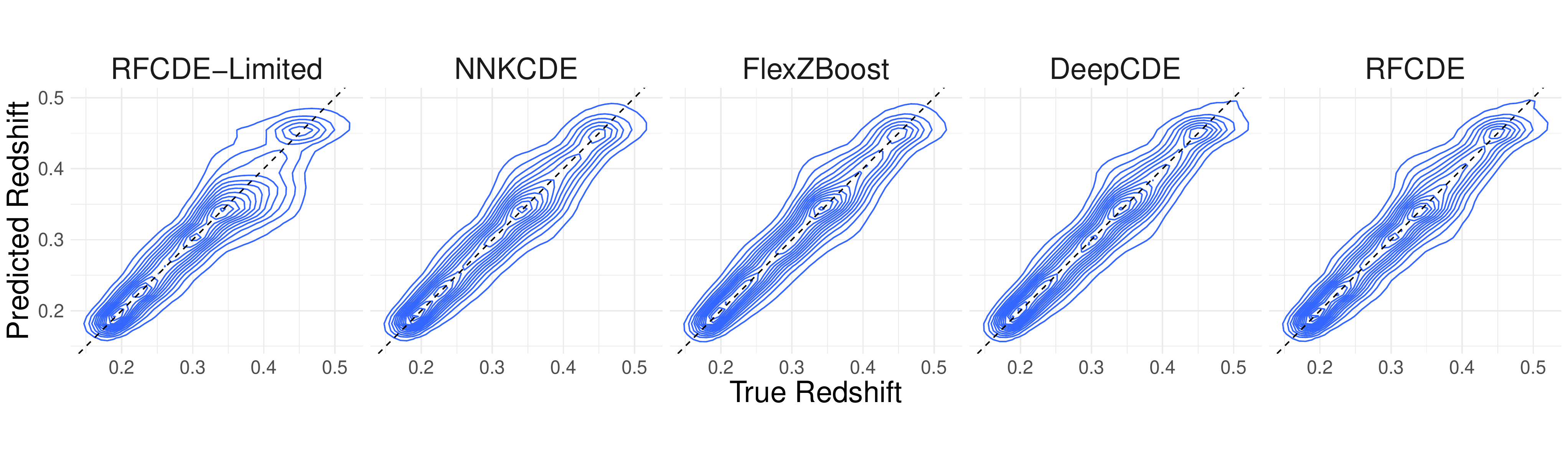}
\caption{Predicted photometric versus true spectroscopic redshift.
Point estimate \pz s are derived from the CDEs by taking the mode of each density.
Predictions from the marginal, not shown here, always predicts the mode of the marginal distribution, which here is  $z = 0.3417$.
 \label{fig:setting1_examples_prediction}}
\end{figure}

The second moment of the distribution and outlier fraction (OLF) are also commonly used diagnostic tools for \pz\ estimators.
Table \ref{tab:photoz_metrics_teddy} lists the $\sigma_f$, $\sigma_{\rm nmad}$ and OLF point estimate metrics from \citet{dahlen2013photoz_assessment}. 
These metrics are defined as follows:
\begin{align}
    \sigma_f &= {\rm rms}\left(\frac{\Delta z}{1 + z_{\rm spec}} \right) = \sqrt{\frac{1}{n} \sum_{i=1}^n \left( \frac{\Delta z_i}{1 + z_{\rm spec,i}} \right)^2} \\
    \sigma_{\rm nmda} &= 1.48 \cdot {\rm median} \left( \frac{\abs{ \Delta z}}{1 + z_{\rm spec}} \right) \\
    {\rm OLF} &= \frac{1}{n} \sum_{i=1}^n \mathbb{I}\left(\frac{\abs{ \Delta z_i}}{1 + z_{\rm spec, i}} > 0.15\right)
\end{align}
where $z_{\rm spec}$ indicates the spectroscopic redshift and $\Delta z$ is the difference between the predicted and spectroscopic redshift. 
According to the point estimate metrics, \rfcde\ and \deepcde\ perform the best on the Teddy data, followed by \flexzboost\ and \nnkcde.
These results are consistent with the CDE loss rankings in Figure~\ref{tab:teddy-cde-loss}.

\begin{table}[htbp]
  \label{tab:photoz_metrics_teddy}
  \caption{\small Performance in photometric redshift predictions for different methods. 
  }
  \centering
  \begin{tabular}{lrrr}
    Method & $\sigma_f$ & $\sigma_{\rm nmad}$ & OLF (\%)\\
    \hline
    Marginal & 0.0049 & 0.0740 & 1.018 \\
    RFCDE-Limited & 0.0010 & 0.0205 & 0.604 \\
    NNKCDE & 0.0008 & 0.0180 & 0.335 \\
    FlexZBoost & 0.0008 & 0.0169 & 0.362 \\
    DeepCDE & 0.0007 & 0.0164 & 0.359 \\
    RFCDE & 0.0007 & 0.0169 & 0.345
  \end{tabular}
\end{table}


\section{Loss Function and Model Assumption Equivalence} 
\label{app:cdeloss_equiv}

In this section we connect the \deepcde\ loss in Equation \ref{eq:cde-loss} to the \flexcode\ model setup in Equation \ref{eq:flexcode}.

\begin{Lemma}
Let $\betavec = \{\beta_i(\mathbf{x})\}_{i=1}^B$ be the coefficients of the \flexcode\ basis expansion in Equation \ref{eq:flexcode}. 
Minimizing the CDE loss in Equation \ref{eq:cde-loss} is equivalent to minimizing the mean squared errors of the basis expansion coefficients, i.e., 
\begin{equation}
    \min_{\hat{\betavec} \in \mathbb{R}^B} \int_\mathcal{X} \int_\mathcal{Y} (\widehat{p}(y | \x) - p(y | \x))^{2} dy dP(\x) \Longleftrightarrow \min_{\hat{\betavec} \in \mathbb{R}^B} \mathbb{E}_{\x} \left[ \norm{\hat{\betavec}(\x) - \betavec(\x)}^2 \right] .
    \label{eq::lemma_deepCDE}
\end{equation}
\end{Lemma}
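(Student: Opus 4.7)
The plan is to substitute the orthonormal basis expansion directly into the CDE loss and use Parseval's identity to collapse the inner integral over $y$ into a sum of squared coefficient differences. I write the true density as $p(y\gvn\x)=\sum_{j\ge 1}\beta_j(\x)\phi_j(y)$ and the candidate estimate as $\widehat p(y\gvn\x)=\sum_{j=1}^B \widehat\beta_j(\x)\phi_j(y)$, so that the pointwise-in-$\x$ $L^2(dy)$ error naturally splits into a term involving only the first $B$ coefficients and a tail term that is independent of $\widehat{\betavec}$.

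Concretely, by orthonormality of $\{\phi_j\}$ in $L^2(dy)$, for each fixed $\x$,
\begin{equation*}
\int_{\mathcal{Y}}\bigl(\widehat p(y\gvn\x)-p(y\gvn\x)\bigr)^{2}\,dy
=\sum_{j=1}^{B}\bigl(\widehat\beta_j(\x)-\beta_j(\x)\bigr)^{2}+\sum_{j>B}\beta_j(\x)^{2}
=\norm{\widehat{\betavec}(\x)-\betavec(\x)}^{2}+R_B(\x),
\end{equation*}
where $R_B(\x):=\sum_{j>B}\beta_j(\x)^{2}$ depends only on $p$ and on the truncation level $B$. Integrating both sides against $dP(\x)$ and invoking Fubini (justified by non-negativity of the integrand) gives
\begin{equation*}
L(\widehat p,p)=\mathbb{E}_{\x}\!\left[\norm{\widehat{\betavec}(\x)-\betavec(\x)}^{2}\right]+\mathbb{E}_{\x}\!\left[R_B(\x)\right].
\end{equation*}
Since $\mathbb{E}_{\x}[R_B(\x)]$ does not involve $\widehat{\betavec}$, minimizing the left-hand side over $\widehat{\betavec}\in\mathbb{R}^B$ is equivalent to minimizing the first term on the right, which is precisely the right-hand side of \eqref{eq::lemma_deepCDE}.

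The only real technicality, which I do not expect to be a serious obstacle, is measure-theoretic bookkeeping: I need the basis expansion of $p(\cdot\gvn\x)$ to converge in $L^2(dy)$ for $P$-almost every $\x$ so that Parseval applies fiberwise, and I need $\mathbb{E}_{\x}[R_B(\x)]$ to be finite so that the decomposition above is a genuine equality of real numbers rather than of extended reals. Both are immediate under the standard assumption that $p(\cdot\gvn\x)\in L^2(dy)$ $P$-a.s.\ with $\mathbb{E}_{\x}\!\left[\int p(y\gvn\x)^{2}\,dy\right]<\infty$, which is the same regularity already implicit in writing down the CDE loss \eqref{eq:cde-loss}.
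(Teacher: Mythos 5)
Your proof is correct and rests on the same mechanism as the paper's: orthonormality of $\{\phi_j\}$ collapses the inner $y$-integral into sums over coefficients, and the two objectives differ by an additive term independent of $\hat{\betavec}$. The only cosmetic difference is that you apply Parseval to the difference $\widehat p - p$ directly, which yields the slightly more informative exact identity $L(\widehat p,p)=\mathbb{E}_{\x}\bigl[\norm{\hat{\betavec}(\x)-\betavec(\x)}^2\bigr]+\mathbb{E}_{\x}[R_B(\x)]$ with the truncation bias isolated, whereas the paper expands the square and discards the $\hat{\betavec}$-independent terms on each side separately.
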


\begin{proof}
We prove the statement by showing that the two minimization problems are equivalent.\\

First, considering the LHS of Equation~\ref{eq::lemma_deepCDE}, we have that:

\begin{align*}
    & \min_{\hat{\betavec} \in \mathbb{R}^B} \int_\mathcal{X} \int_\mathcal{Y} (\widehat{p}(y | \x) - p(y | \x))^{2} dy dP(\x) \\
    &\Longleftrightarrow \min_{\hat{\betavec} \in \mathbb{R}^B} \int_\mathcal{X} \int_\mathcal{Y} \widehat{p}(y | \x)^{2} dy dP(\x) - 2 \int_\mathcal{X} \int_\mathcal{Y} \widehat{p}(y | \x) p(\x,y) d\x dy \\
    &= \min_{\hat{\betavec} \in \mathbb{R}^B} \int_\mathcal{X} \int_\mathcal{Y} \left(\sum_{i=1}^B \hat{\beta}_i(\x) \phi_i(y) \right)^{2} dy dP(\x) - 2 \int_\mathcal{X} \int_\mathcal{Y} \left(\sum_{i=1}^B \hat{\beta}_i(\x) \phi_i(y) \right) p(y | \x) dP(\x)dy \\
    &= \min_{\hat{\betavec} \in \mathbb{R}^B} \int_\mathcal{X} \sum_{i,j=1}^B \hat{\beta}_i(\x) \hat{\beta}_j(\x) \int_\mathcal{Y}  \phi_i(y) \phi_j(y) dy dP(\x) - 2 \int_\mathcal{X} \int_\mathcal{Y} \left(\sum_{i=1}^B \hat{\beta}_i(\x) \phi_i(y) \right) \left(\sum_{j=1}^B \beta_j(\x) \phi_j(y) \right) dP(\x) dy \\
    &= \min_{\hat{\betavec} \in \mathbb{R}^B} \int_\mathcal{X} \sum_{i=1}^B \hat{\beta}^2_i(\x) dP(\x) - 2 \int_\mathcal{X} \sum_{i=1}^B \hat{\beta}_i(\x) \beta_i(\x)  dP(\x) 
    = \min_{\hat{\betavec} \in \mathbb{R}^B} \sum_{i=1}^B  \mathbb{E}_{\x} \left[ \hat{\beta}^2_i(\x) - 2 \hat{\beta}_i(\x) \beta_i(\x)\right],
\end{align*}
where the second equality follows from the fact that the set $\{\phi_i(y)\}_{i=1}^B$ is part of an orthonormal basis, that is,
\begin{equation*}
    \int_\mathcal{Y} \phi_i(y) \phi_j(y) dy = \delta_{ij} = \begin{cases} 1 & i=j \\ 0 & i \neq j  \end{cases}.
\end{equation*}\\

Next, the RHS of Equation~\ref{eq::lemma_deepCDE} reduces to:
\begin{align*}
\min_{\hat{\betavec} \in \mathbb{R}^B} \mathbb{E}_{\x} \left[ \norm{\hat{\betavec}(\x) - \betavec(\x)}^2 \right] = 
\min_{\hat{\betavec} \in \mathbb{R}^B} \sum_{i=1}^B \mathbb{E}_{\x} \left[ \left(\hat{\beta}_i(\x) - \beta_i(\x) \right)^2  \right] 
\Longleftrightarrow \min_{\hat{\betavec} \in \mathbb{R}^B} \sum_{i=1}^B \mathbb{E}_{\x} \left[ \hat{\beta}^2_i(\x) - 2\hat{\beta}_i(\x)\beta_i(\x) . \right]
\end{align*}

\end{proof}

In \deepcde, we implement an especially simple expression for the loss function by inserting the orthogonal basis expansion (Equation~\ref{eq::expansion_DeepCDE}) into the CDE loss (Equation~\ref{eq:cde-loss}) to yield
\begin{eqnarray}
    L(p, \widehat{p}) &=&  \int \sum_{j=1}^B \hat{\beta}^2_j(\x) dP(\x) - 2 \int \int \sum_{j=1}^B \hat{\beta}_j(\x) \phi_j(y)  dP(\x,y) \nonumber\\ 
    &\approx&
    \frac{1}{n} \sum_{i=1}^n \left( \sum_{j=1}^B  \hat{\beta}^2_j(\mathbf{x}_{i}^{te}) - 2 \sum_{j=1}^B  \hat{\beta}_j(\mathbf{x}_{i}^{te}) \phi_j(y_{i}^{te}) \right).
\end{eqnarray}
The last expression represents our empirical CDE loss on validation data $\left\{(\mathbf{x}_{i}^{te}, y_{i}^{te})\right\}_{i=1}^{n}$
and is easy to compute.

\vspace{0.33cm}

\section{Accounting for Covariate Shift in Photometric Redshift Estimation via Importance Weights}
\label{sec:cov_shift_photoz}

In the photo-$z$ application of Section \ref{sec:univ_response_mult_data} we have for simplicity assumed the distributions of the spectroscopic and photometric data to be the same. 
In this section we present a solution by  \cite{izbicki2017photo} and \cite{freeman2017unified}  for a setting where the spectroscopically labeled galaxies and the photometric ``target'' galaxies of interest lie in different regions of the color space.
Under so-called {\em covariate shift} 
the the (marginal) distributions $\ptr(\x)$ and $\pte(\x)$ of the spectroscopic and photometric data, respectively, are different --- but their  conditional distributions $\ptr(z|\x)$ and $\pte(z|\x)$ are assumed to be the same. In other words: 
 $$ \ptr(\x) \neq \pte(\x)\quad, \quad \ptr(z|\x) = \pte(z|\x) \implies \ptr(z,\x) \neq \pte(z,\x). $$

Covariate shift arises from missing at random (MAR) bias, which for example occurs when the probability that a galaxy is labeled with a spectroscopic redshift depends only on its photometry and possibly other characteristics $\x$ but not on its true redshift $z$. 
In such a setting the goal is to evaluate the performance on the photometric ``target'' data of interest. 
Hence, in terms of loss functions, our goal is to minimize a CDE loss computed on the {\em photometric}, and not the spectroscopic, data as in 
\begin{equation}
\label{eq:mod_cde_loss_cs}
    L(\widehat{p},p) =\iint \left(\widehat{p}(z|\x)-p(z|\x)\right)^2 \pte(\x) d\x dz ,
\end{equation}
where the weight function $\pte(\x)$ will enforce a good fit in the regions of color or feature space where most of the photometric data reside.
The challenge is how to compute the above loss for the photometric data of interest, as the true redshift $z$, by construction, is only known for the spectroscopic set. 
A solution proposed in the above-mentioned papers, is to rewrite the loss by introducing so-called {\em importance weights} $\beta(\x) := \pte(\x)/\ptr(\x)$.
An example of how to estimate such weights can be found in \cite{izbicki2017photo}, \cite{Sugiyama2007ImportanceWeights}, \cite{Gretton2009CovariateShift} and \cite{Kremer2015ImportanceWeightsNN}.
Once the importance weights have been estimated, one can rewrite an estimate of the loss in \ref{eq:mod_cde_loss_cs} as an expression that can be computed on held-out data  without knowing the redshift $z$ of the (photometric) target data: 
\begin{align}
\label{eq:emp_cde_loss_cs}
    \hat{L}(\hat{p}, p) = \frac{1}{n_{\rm photo}}\sum_{j=1}^{n_{\rm photo}}\int \widehat{p}^2\left(z | \x_j^{\rm photo}\right)dz-2\frac{1}{n_{\rm spec}}\sum_{k=1}^{n_{\rm spec}}\widehat{p}\left(z^{\rm spec}_k|\x^{\rm spec}_k\right)\widehat{\beta}\left(\x^{\rm spec}_k\right) ,
\end{align}
where $\{z_k^{\rm spec}, \x_k^{\rm spec} \}_{k=1}^{n_{\rm spec}}$ and $\{z_j^{\rm photo}, \x_j^{\rm photo} \}_{j=1}^{n_{\rm photo}}$ represent the spectroscopic and photometric data, respectively, in the held-out set.
Furthermore, as noted in \cite{izbicki2017photo} (Section 5), one can often drastically improve CDE performance by just restricting the spectroscopic training data to regions where the unlabeled photometric data reside. 
A simple procedure for identifying such training examples is to compute importance weights for all labeled examples, and then replace the examples in the training set with $\hat{\beta}(\x) = 0$ with new labeled examples that have  $\hat{\beta}(\x) \neq 0$.
Effectively, this will create a spectroscopic training set that resides in the same part of the feature space as the unlabeled photometric test data (when that is possible).

Figure \ref{fig:covariate_shift} (adapted from \citet{izbicki2017photo}) shows the performance of three versions of the \flexcode-series.
The difference between these approaches is in how they are tuned.
The first result ("no reweighting") corresponds to the standard version of \flexcode\ that does not take into account covariate shift and hence is tuned on held-out validation data according to Equation \eqref{eq:estimated-cde-loss}. 
The second estimator ("removing zeros") is also tuned according to Equation \eqref{eq:estimated-cde-loss}, but on a held-out sample that has no training examples in regions with $\hat \beta (\x)=0$.
For the last estimator ("reweighting"), tuning is performed according to Equation \eqref{eq:emp_cde_loss_cs}. 
The plot shows that the simple approach of removing galaxies with zero estimated weights already improves the estimators, but reweighting the loss by $\hat \beta$ 
can yield even better performance.

\begin{figure}[htbp]
\centering
\includegraphics[width=0.475\linewidth]{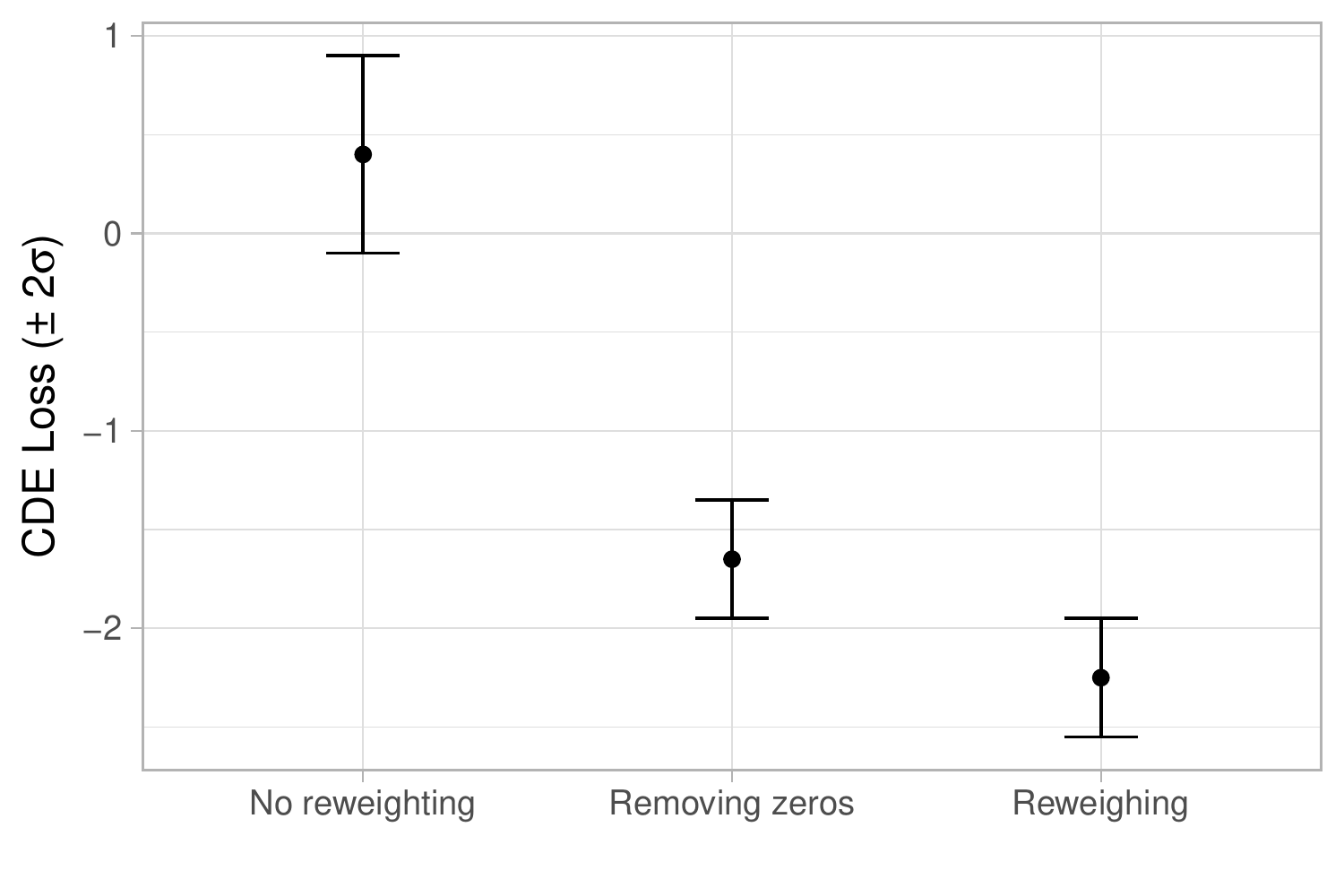}
\caption{Performance of three versions of 
\flexcode-Series to the photometric data used in
\citep{izbicki2017photo}. Left: estimator that ignores selection bias;
middle: removing and replacing galaxies with  $\hat  \beta(\x)=0$ in the training set, but otherwise not including the importance weights in the computations; 
right: reweighting galaxies by $\hat \beta(\x)$.
(Adapted from \citep{izbicki2017photo}). }
\label{fig:covariate_shift}
\end{figure}

\end{document}